\newtheorem{theorem}{Theorem}
\newtheorem{lemma}[theorem]{Lemma}
\newcommand{\cut}{\textbf{cut}}
\DeclareMathOperator*{\minimize}{minimize}
\providecommand{\subjectto}{\ensuremath{\text{subject to}}}
\newcommand{\CC}{\ensuremath{\mathcal{C}}\xspace}
\newcommand{\kk}{\ensuremath{\mathcal{K}}\xspace}
\newcommand{\dd}{\ensuremath{\mathcal{D}}\xspace}
\newcommand{\lcc}{\textsc{LambdaCC}\xspace}
\newcommand{\cc}{\textsc{Correlation Clustering}\xspace}
\newcommand{\mcc}{\textsc{MotifCC}\xspace}
\newcommand{\mmcc}{\textsc{Mixed Motif Correlation Clustering}\xspace}
\newcommand{\muncut}{\textsc{Min Uncut}\xspace}
\newcommand{\maxcut}{\textsc{Max Cut}\xspace}
\newcommand{\mmccs}{\textsc{MMCC}\xspace}
\newcommand{\ccs}{\textsc{CC}\xspace}
\newcommand{\threeLP}{\textsc{ThreeLP}\xspace}
\newcommand{\cd}{\textsc{Cluster Deletion}\xspace}
\newcommand{\twocc}{\textsc{2-Correlation Clustering}\xspace}
\newcommand{\spc}{\textsc{Sparsest Cut}\xspace}
\newcommand{\NEPPC}{\ensuremath{\mathit{NEPPC}}}
\newcommand{\OPT}{\ensuremath{\mathit{OPT}}\xspace}
\newcommand{\ALG}{\ensuremath{\mathit{ALG}}\xspace}
\newcommand{\eps}{\ensuremath{\varepsilon}}
\newcommand{\dist}{\ensuremath{\mathit{dist}}}
\newcommand{\ee}{\ensuremath{\mathcal{E}}\xspace}
\newcommand{\LP}{\ensuremath{\mathrm{LP}}\xspace}
\begin{document}
\title{Correlation Clustering Generalized}

\author[a]{David F. Gleich} 
\author[b]{Nate Veldt}
\author[c]{Anthony Wirth}

\affil[a]{Purdue University Computer Science Department}
\affil[b]{Purdue University Mathematics Department}
\affil[c]{The University of Melbourne, Computing and Information Systems School}
\date{}
\maketitle

\begin{abstract}
	We present new results for LambdaCC and MotifCC, two recently introduced variants of the well-studied correlation clustering problem.
	Both variants are motivated by applications to network analysis and community detection, and have non-trivial approximation algorithms.
	
	We first show that the standard linear programming relaxation of LambdaCC has a~$\Theta(\log n)$ integrality gap for a certain choice of the parameter~$\lambda$.
	This sheds light on previous challenges encountered in obtaining parameter-independent approximation results for LambdaCC.
	We generalize a previous constant-factor algorithm to provide the best results, from the LP-rounding approach, for an extended range of~$\lambda$.
	
	MotifCC generalizes correlation clustering to the hypergraph setting. In the case of hyperedges of degree~$3$ with weights satisfying probability constraints, we improve the best approximation factor
	from~$9$ to~$8$. We show that in general our algorithm gives a~$4(k-1)$ approximation when hyperedges have maximum degree~$k$ and probability weights.
	We additionally present approximation results for LambdaCC and MotifCC where we restrict to forming only two clusters. 
\end{abstract}

\section{Introduction}
\cc (\ccs), introduced by Bansal et al.~\cite{Bansal2004correlation}, is often viewed as a partitioning problem on signed graphs.
Given~$n$ nodes whose edges have so-called positive or negative weights (maybe both),
the goal is to find the clustering which \emph{correlates} as much as possible with the edge weights.
That is, a positive-weight edge suggests two nodes should be clustered together, while a negative-weight edge suggests separation, and these weights
are in some sense \emph{soft constraints}
There is a variety of settings for \cc, including different objective functions, and special classes of edge weights,
leading to a rich and interesting family of approximation algorithms and hardness results.

In this document, we consider two recent variants of the problem, called \textsc{Lambda Correlation Clustering} (\lcc)~\cite{veldt2018correlation}
and \textsc{Motif Correlation Clustering} (\mcc)~\cite{li2017motif}.
Although introduced independently, both problems are motivated by applications to community detection in unsigned graphs,
and are interesting to study from a theoretical perspective, each coming with non-trivial approximation guarantees.
\lcc is a generalization of the standard unweighted \ccs in which all positive edges have a common weight, while all negative
edges have another (possibly different) common weight.
A parameter~$\lambda$ determines these two weights and, implicitly, controls the size and structure of clusters formed by
optimizing the objective.
\mcc is a generalization of \cc to hypergraphs, designed to provide a framework for clustering graphs based on higher-order
subgraph patterns (i.e., \emph{motifs}). 
We present new results for \lcc and \mcc, not only where the number of clusters formed is
an outcome of minimizing the objective, but also where we (additionally) restrict to forming only two clusters.

\paragraph{Our results}
\begin{enumerate}
	\item We show that there exists some small~$\lambda$ such that the \lcc LP relaxation has a~$\Theta(\log n)$
	integrality gap.
	This hints at why constant-factor approximations have been developed for~$\lambda \geq 1/2$, but no analogous result has been found for \emph{small}~$\lambda$.
	We also extend the analysis of our previous algorithm for \lcc~\cite{veldt2018correlation} to outline the range of~$\lambda < 1/2$ values,
	that admit an approximation factor in~$o(\log n)$.
	\item We show that when we restrict to two clusters, \lcc reduces to the \muncut problem, which implies
	an $O(\sqrt{\log n})$ approximation for this special case~\cite{agarwal2005uncut}.
	\item We generalize the $4$-approximation of Charikar et al.\ for complete unweighted correlation clustering to obtain
	a~$4(k-1)$ approximation for \mcc on hypergraphs with edges of degree~$k$ where edge weights satisfy probability constraints.
	We consider the same LP relaxation as Li et al.~\cite{li2017motif}, and apply a similar rounding technique.
	However, we provide an approximation guarantee for arbitrary~$k$ that is linear in~$k$, in addition improving the
	factor for $k = 3$ from~$9$ to~$8$. 
	\item For \textsc{Two-Cluster MotifCC}, we design an algorithm that gives an asymptotic $1+k\,2^{k-2}$ approximation by generalizing the
	$3$-approximation of Bansal et al~\cite{Bansal2004correlation} for 2-\ccs (which applies when $k = 2$).
	This is the first combinatorial result for 2-\mcc, and is a~$7$-approximation for~$k=3$.
\end{enumerate}

\section{Background and Previous Results}
\label{sec:related}
In the most general formulation of \cc on (undirected) graphs -- excluding, for the moment, the generalization to
hypergraphs -- each pair of nodes~$(i,j)$  is assigned a pair of nonnegative weights $(w_{ij}^+, w_{ij}^-)$, i.e., a
similarity score and a dissimilarity score. In many cases, only one of these weights is assumed to be nonzero, to
indicate strict similarity or strict dissimilarity between pairs of nodes. We focus on the objective of \emph{minimizing disagreements}, which can be formally expressed as an integer linear program:
\begin{equation}
\label{eq:cc}
\begin{array}{lll} \text{minimize} & \sum_{i<j} w_{ij}^+ x_{ij} + w_{ij}^- (1-x_{ij})\\  \subjectto
& x_{ij} \leq x_{ik} + x_{jk} & \text{for all $i,j,k$} \\
& x_{ij} \in \{0,1\} & \text{for all $i < j$}
\end{array}
\end{equation}
The variable~$x_{ij}$ is~$1$ if nodes~$i$ and~$j$ are in separate clusters, and is~$0$ otherwise.
Thus, a clustering that separates $i,j$ incurs a penalty (also called a \emph{mistake}, or a \emph{disagreement}) of
weight $w_{ij}^+$, while if $i,j$ are together the penalty has weight~$w_{ij}^-$.
The objective of \emph{maximizing agreements} has also been extensively considered: it shares the same set of optimal
clusterings as minimizing disagreements, but is \emph{easier} from the perspective of approximations.
For the general weighted case, correlation clustering is equivalent to \textsc{Minimum
	Multicut}~\cite{demaine2006correlation}, which implies an $O(\log n)$ approximation, but also suggests that \cc (with
general weights) is unlikely to be approximated to within a constant factor in polytime~\cite{chawla2006hardness}. 
For weights satisfying probability constraints (i.e., $w_{ij}^+ + w_{ij}^- = 1$), Ailon et al.\ gave a $2.5$ approximation~\cite{ailon2008aggregating}. The best approximation factor for the standard unweighted problem (i.e., $(w_{ij}^+,w_{ij}^-) \in \{ (0,1), (1,0) \}$)
is slightly better than~$2.06$~\cite{chawla2015near}. 

\paragraph{Fixing the number of clusters}
In general, \cc does not require a user to specify number of clusters to be formed; the number of clusters arises
naturally by optimizing the objective. However, restricting the output of \cc to a fixed number of clusters has also been
studied extensively. In their seminal work, Bansal et al.\ showed a $3$-approximation for minimizing disagreements in the two-cluster
unweighted case (\twocc)~\cite{Bansal2004correlation}.
Later, Giotis and Guruswami showed a \emph{polynomial time approximation
	scheme} for maximizing agreements and for minimizing disagreements, when the number clusters is a fixed
constant~\cite{giotis2006correlation}. For the maximization version, \twocc is equivalent to \maxcut; based on this Dasgutpta et al.\ showed a 0.878-approximation for arbitrary weights~\cite{dasgupta2006algorithmic}. Extending Bansal et al.'s approach, Coleman et al.\ introduced faster, greedy $2$-approximations for minimizing disagreements for unweighted \twocc~\cite{coleman20082cc}, and gave a more extensive overview of the historical interest in this problem. Given this recurring interest in correlation clustering with a fixed number of clusters, we address several questions involving the two-cluster case in this manuscript.

\subsection{Lambda Correlation Clustering}
In previous work, we introduced the \lcc objective, which can be viewed as a special case of weighted correlation
clustering~\eqref{eq:cc} in which $(w_{ij}^+, w_{ij}^-) \in \{ (1-\lambda, 0), (0, \lambda) \}$ for some user-chosen
parameter $\lambda \in (0,1)$. This provides the following framework for partitioning unsigned networks: given an
unsigned graph $G = (V,E)$, treat each edge, in~$E$, as a \emph{positive edge} of weight~$(1-\lambda)$ in a signed
graph, and treat each non-edge as a \emph{negative edge} with weight~$\lambda$.
When $\lambda = 1/2$, \lcc amounts to unweighted \cc;
with \emph{small}~$\lambda$, \lcc amounts to \spc; and when~$\lambda$ is large, \lcc amounts to
\cd. {We previously outlined another, similar, edge-weighting scheme~\cite{veldt2018correlation}
	that is equivalent to the \textsc{Modularity} objective~\cite{newman2004modularity}.
	We do not consider it here, however, as this scheme does not appear to lead to new approximation results.}

For $\lambda > 1/2$, we gave a $3$-approximation based on the LP-rounding technique of van~Zuylen and
Williamson~\cite{zuylen2009deterministic}, and a $2$-approximation which holds specifically for $\lambda > |E| / (1+|E|)$,
hence, for \cd. We also note that when $\lambda > 1/2$, \lcc can be viewed as a specific case of the specially weighted correlation
clustering variant considered by Puleo and Milenkovic~\cite{puleo2015cc}, for which they gave a $5$-approximation based on a generalization of the LP rounding scheme of Charikar et al.~\cite{charikar2005clustering}. However, the proof strategies for all of these algorithms fail when considering arbitrarily small~$\lambda$.

\subsection{Motif Correlation Clustering}
Li et al.\ introduced a higher-order generalization of \cc, which they call \textsc{Motif Correlation
	Clustering} (\mcc), as a means for clustering networks based on higher-order motif patterns shared among
nodes~\cite{li2017motif}. This objective is motivated by previous successful results for motif-based graph clustering
(see e.g.,~\cite{benson2016}). Although a similar higher-order correlation clustering objective was considered by Kim et al.\ for image segmentation~\cite{kim2011highcc}, Li et al.\ were the first to study the objective from a theoretical perspective.
In their approach, we let~$E_k$ denote the set of all $k$-tuples of nodes in~$G$, and let each $\ee \in E_k$ have a positive weight, $w_\ee^+$, and a negative weight, $w_\ee^-$. If a clustering separates at least one pair of nodes in~\ee, this gives a penalty of $w_\ee^+$; otherwise, if all nodes in~\ee are clustered together, there is a penalty of $w_\ee^-$.
\mcc is formally expressed as the following ILP, a generalization of ILP~\eqref{eq:cc}:
\begin{align}
&\begin{array}{lll} 
\label{eq:ILP}
\minimize & \sum_{\ee \in E_k} w_\ee^+x_\ee + w_\ee^-(1-x_\ee)\\
\subjectto & x_{uv} \leq x_{uw} + x_{vw} &\text{ for all $u,v,w$} \\ &x_{uv} \in \{0,1\} & \text{
	for all $u<v$} \\ & x_{uv} \leq x_{\ee} & \text{ for all $u,v \in \ee$} \\& (k-1) x_\ee \leq \sum_{u,v \in \ee} x_{uv}
& \text{ for all $\ee \in E_k$} \\ &x_\ee \in \{0,1\} & \text{ for all $\ee \in E_k$}. \end{array} 
\end{align}
The first two constraints above ensure the variables encode a clustering ($x_{uv} = 1$ if $u,v$ are separated). Since~$x_\ee$
is binary, constraint $x_\ee \geq x_{uv}$ ensures that if any two nodes $u,v$ in $\ee$ are separated, then $x_\ee = 1$
(i.e., the $k$-tuple is split). The fourth constraint guarantees that $x_\ee = 0$ if all pairs of nodes in $\ee$ are together. Li et al.\ considered an even more general objective, which they referred to as \mmcc (\mmccs),
where motifs of multiple sizes are considered at once, and the objective is a positive linear combination of objectives
of the form~\eqref{eq:ILP} for different values of~$k$.
In their analysis they restrict to hyperedges of size~$2$ and~$3$, in other words they optimize an objective like this:
\[\minimize \hspace{.5cm}
{\textstyle \sum_{u<v}
	w_{uv}^+ x_{uv}+
	w_{uv}^- (1-x_{uv}) + \sum_{\ee \in E_3} w_\ee^+x_\ee + w_\ee^-(1-x_\ee)}\,.\]
For this setting, they show a $9$-approximation for the problem when hyperedge weights satisfy probability constraints
($w_\ee^+ + w_\ee^- = 1$, for every hyperedge~\ee of size~$2$ or~$3$). 
Recently, Fukunga gave an $O(k \log n)$ approximation for general weighted hypergraphs by rounding the same LP~\cite{fukunga2018highcc}.

\section{New Results for \lcc}
Given a signed graph,~$G$, in which every pair of nodes is part of a negative edge set,~$E^-$, or a positive edge set,~$E^+$, the linear program relaxation of \lcc is
\begin{equation}
\label{eq:lcc}
\begin{array}{lll} \text{minimize} & \sum_{(i,j) \in E^+} (1-\lambda) x_{ij} \,\, + & \sum_{(i,j) \in E^-} \lambda (1-x_{ij})\\  \subjectto
& x_{ij} \leq x_{ik} + x_{jk} & \text{for all $i,j,k$} \\
& 0 \leq x_{ij} \leq 1 & \text{for all $i < j$}
\end{array}
\end{equation}
Although a constant-factor approximation for \lcc exists for~$\lambda \geq 1/2$, by rounding LP~\eqref{eq:lcc}, we show that there exists
some small~$\lambda$ such that the integrality gap is~$O(\log n)$.
We then give parameter-dependent approximation guarantees for small~$\lambda$, and consider new results for two-cluster \lcc.

\subsection{Integrality Gap for the \lcc Linear Program}
\label{sec:logproof}
Demaine et al.\ prove that the integrality gap for the general weighted
\cc LP relaxation is $O(\log n)$~\cite{demaine2006correlation}. This does not immediately imply anything for our
specially weighted case, but adapting some of their ideas, and adding some non-trivial steps, does reveal an $O(\log n)$ integrality gap for the \lcc linear program relaxation. The proof takes the following steps.
\begin{enumerate}
	\item Construct an instance of \lcc from an expander graph,~$G$.
	\item Prove that, because of the expander properties of~$G$, the optimal \lcc clustering must make~$\Omega(n)$ mistakes.
	\item Demonstrate the LP relaxation has a feasible solution with a score of $O({n}/{\log n})$.
\end{enumerate}

In order to accomplish third step listed above, we do not (necessarily) produce a feasible solution for the standard LP relaxation of \lcc: in particular, in our solution triangle constraints are not guaranteed. Instead, we produce a feasible solution for a related linear program considered by Wirth in his PhD thesis~\cite{wirth2004approximation}. The fundamental construct of this LP is the \textsc{Negative Edge with Positive Path Cycle} (NEPPC), where, $\NEPPC(i_1,i_2, \hdots , i_m)$ represents a sequence (a \emph{path}) of (positive) edges, $(i_1,i_2), (i_2, i_3), \hdots ,(i_{m-1},i_m) \in
E$, with a single (negative) non-edge completing the cycle: $(i_1, i_m) \notin E$. 
For \lcc, defined on a graph $G = (V,E)$, with parameter~$\lambda \in (0,1)$, we have the linear program:
\begin{equation}
\label{neppc}
\begin{array}{lll} \text{minimize} & \sum_{(i,j)\in E} (1-\lambda)x_{ij} \,\,+ & \sum_{(i,j) \notin E} \lambda (1-x_{ij})\\ \subjectto
& x_{i_1, i_m}
\leq
\sum_{j=1}^{m-1} x_{i_j, i_{j+1}}
&\text{ for all $\NEPPC(i_1,i_2, \hdots , i_m)$} \\
& x_{ij}
\leq
1
& \text{ for all $(i,j)\notin E$} \\
& 0
\leq
x_{ij}
&\text{ for all $(i,j)$}\,.
\end{array}
\end{equation}
Wirth~\cite{wirth2004approximation} proved that the set of optimal solutions to the NEPPC linear program~\eqref{neppc}
is exactly the same as the optimal solution set to the \cc LP, the relaxation of ILP~\eqref{eq:cc}.\footnote{Although the proof is shown for the unweighted case, we note that all aspects of the proof immediately carry over to the weighted
	case.} Since a feasible solution for the \lcc NEPPC linear program~\eqref{neppc} is an upper bound on the optimum for~\eqref{neppc},
which is the same as the optimum for the standard \lcc LP, we can bound the optimum of the latter. We now prove our result:
\begin{theorem}
	There exists some $\lambda$ such that the integrality gap of LP~\eqref{eq:lcc} is $O(\log n)$.
\end{theorem}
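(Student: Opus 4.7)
The plan is to follow the three-step outline stated just before the theorem: (i) build the instance from an expander, (ii) use expansion to force the integer optimum to be $\Omega(n)$, and (iii) build an LP solution of value $O(n/\log n)$ via shortest-path distances, so that the gap is $\Omega(\log n)$ --- read as the ``$O(\log n)$ integrality gap'' advertised in the statement.

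For step (i), I would take $G = (V,E)$ to be a constant-degree ($d$-regular, $d \ge 3$) explicit expander on $n$ nodes, e.g., a Ramanujan graph, and set $\lambda = c/n$ for a sufficiently small constant $c > 0$. The salient properties of $G$ are: edge-expansion $h(G) = \Omega(1)$, diameter $D = \Theta(\log n)$, and ball sizes $|B_\ell(v)| = O(d^\ell)$; in particular $d^D = \Theta(n)$.

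For step (ii), I would split on the size of the largest cluster in an arbitrary clustering. If every cluster has size at most $n/2$, edge-expansion forces $\Omega(n)$ positive edges to be cut, for cost $(1-\lambda)\,\Omega(n) = \Omega(n)$. Otherwise, some cluster $C$ has $|C| > n/2$; because $|E| = O(n)$, this cluster contains at least $\binom{|C|}{2} - |E| = \Omega(n^2)$ non-edges, for cost $\lambda \cdot \Omega(n^2) = \Omega(n)$. Either way $\OPT = \Omega(n)$.

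For step (iii), I would set $x_{ij} = d_G(i,j)/D$ on every pair. Since $d_G$ is a metric bounded by $D$, this assignment satisfies the triangle inequality and $0 \le x_{ij} \le 1$ immediately, so it is feasible both for LP~\eqref{eq:lcc} and \emph{a fortiori} for NEPPC~\eqref{neppc}; Wirth's equivalence then lets us read off the standard LP value. The positive-edge cost is $(1-\lambda)|E|/D = O(n/\log n)$ since $|E| = O(n)$. For the negative-edge cost $\lambda \sum_{(i,j)\notin E} (1 - d_G(i,j)/D)$ I would bin non-edges by distance $\ell$, use $N_\ell \le n d^\ell / 2$, and substitute $m = D - \ell$:
\[
\sum_{(i,j)\notin E}\!\left(1 - \tfrac{d_G(i,j)}{D}\right) \le \sum_{\ell=2}^{D} \frac{n d^\ell}{2} \cdot \frac{D-\ell}{D} = \frac{n d^D}{2D} \sum_{m=0}^{D-2} \frac{m}{d^m} = O\!\left(\frac{n^2}{\log n}\right),
\]
where the last step uses convergence of $\sum_m m/d^m$ and $d^D = \Theta(n)$. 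Multiplying by $\lambda = c/n$ gives negative cost $O(n/\log n)$, so the total LP value is $O(n/\log n)$ and the gap against $\OPT = \Omega(n)$ is $\Omega(\log n)$.

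The main obstacle I anticipate is carrying out the distance tail bound in step (iii): the naive estimate $\sum (1 - d_G(i,j)/D) \le n^2$ only yields LP value $O(n)$, matching the integer optimum and giving no gap at all. The essential observation is that the exponential ball growth forces almost all pair-distances to lie within $O(1)$ of $D$, saving the required $\Theta(\log n)$ factor; pinning down that saving with rigorous expander bounds (rather than with heuristic ``random regular graph'' intuition) is the delicate step.
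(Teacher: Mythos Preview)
Your overall architecture matches the paper's, and your step~(ii) is in fact cleaner: the paper pins down $\lambda$ just above the scaled-sparsest-cut value $\lambda^*$ and invokes a structural theorem from earlier work to identify the optimal clustering and compute its cost; your direct ``large cluster vs.\ small clusters'' dichotomy gets $\OPT = \Omega(n)$ with less machinery and works for any $\lambda = \Theta(1/n)$.

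The real problem is in step~(iii). Your claim $d^D = \Theta(n)$ is not merely delicate --- it is false for every $d$-regular graph with $d \ge 3$. The Moore bound gives $n \le 1 + d\sum_{i=0}^{D-1}(d-1)^i = O((d-1)^D)$, so $D \ge \log_{d-1} n - O(1)$, and hence $d^D \ge n^{\log_{d-1} d} = \Omega(n^{1+\delta})$ for some $\delta = \delta(d) > 0$. Plugging this into your displayed bound yields $O(n^{2+\delta}/\log n)$, and after multiplying by $\lambda = c/n$ you get $O(n^{1+\delta}/\log n)$, not $O(n/\log n)$; the gap evaporates. (Using the sharper ball estimate $O((d-1)^\ell)$ does not save you either: you would then need $(d-1)^D = O(n)$, i.e., diameter within $O(1)$ of the Moore bound, which essentially nothing achieves.) The underlying issue is that normalizing by the \emph{full} diameter $D$ forces you to control the contribution of pairs at distance between $\log_d n$ and $D$, and those pairs are almost all of them.

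The paper sidesteps this by scaling distances by $2/\log_d n$ rather than $1/D$ and thresholding: edges get $x_{ij} = 2/\log_d n$, non-edges get $x_{ij} = 1$ if $d_G(i,j) \ge (\log_d n)/2$ and $x_{ij} = 0$ otherwise. Then only non-edges at distance below $(\log_d n)/2$ contribute to the negative cost, and there are at most $n\cdot d^{(\log_d n)/2} = n^{3/2}$ of those, for total negative cost $\lambda \cdot O(n^{3/2}) = O(\sqrt{n})$. Because this assignment violates triangle inequalities (two ``close'' non-edges set to $0$ flanking an edge set to $2/\log_d n$), the paper routes through the NEPPC LP~\eqref{neppc}, which has the same optimum but only path-type constraints. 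If you prefer to stay with LP~\eqref{eq:lcc} directly, replace your assignment by the truncated metric $x_{ij} = \min\{1,\, 2\,d_G(i,j)/\log_d n\}$: this is still a metric, gives the same $O(n/\log n)$ positive cost, and the same $O(\sqrt{n})$ negative cost by the identical ball-counting argument.
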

\begin{proof}
	We separate the proof into the three steps outlined at the beginning of the section: constructing a \lcc instance from an expander graph, bounding the \lcc solution from below, and then upper bounding the LP relaxation.
	
	\paragraph{Constructing an instance of \lcc from an expander} 
	Let $G = (V,E)$ be a $(d,c)$-expander graph, where both $d$ and $c$ are constants (Reingold et al.\ proved that such expanders exist~\cite{reingold2000entropy}).
	That is,~$G$ is $d$-regular, and for every~$S \subset V$ with~$|S| \leq n/2$, we have
	\[\frac{\cut(S)}{|S|} \geq c  \implies  \frac{\cut(S)}{|S|} + \frac{\cut(S)}{|\bar{S}|} \geq c \implies \frac{\cut(S)}{|S| |\bar{S} | } \geq \frac{c}{n} \]
	where $\cut(S)$ denotes the number of edges between $S$ and $\bar{S} = V \backslash S$. Define the \emph{scaled sparsest cut} of a set $S$ to be ${\cut(S)}/{(|S| |\bar{S}|)}$ and let $\lambda^*$ minimize this ratio over all possible sets $S \subset V$. In previous work we showed that for any $\lambda \leq \lambda^*$, the optimal \lcc clustering places all nodes into one cluster, but there exists a range of $\lambda$ values slightly larger than $\lambda^*$ such that the optimum clustering coincides with a partitioning that produces the scaled sparsest cut score~\cite{veldt2018correlation}. For the expander graph we consider, this~$\lambda^*$ is at most the scaled sparsest cut score obtained by setting~$S$ to be a single node, so we have these upper and lower bounds on~$\lambda^*$:
	${c}/{n} \leq \lambda^* \leq {d}/(n-1)$.

	Let~$S^*$ be a set inducing an optimal scaled sparsest cut partition: $\lambda^* = {\cut(S^*)}/(|S^*| |\bar{S^*}|)$.
	From Theorem~3.2 in our previous work~\cite{veldt2018correlation}, we know that there exists some~$\lambda'$, slightly larger than~$\lambda^*$ whose
	optimum \lcc solution is the bipartition $\{S^*,\bar{S^*}\}$; let the \lcc score of this solution be~\OPT, and let~$\eps = \lambda'-\lambda^*$.
	We can choose~$\eps>0$ to be arbitrarily small, so it suffices to assume $\lambda' < 2\lambda^*$.
	
	\paragraph{Bounding $\OPT$ from below}
	With our choice of~$\lambda'$, by definition,
	\begin{align*}
	\OPT &= \cut(S^*) - \lambda' |S^*| |\bar{S}^*| + \lambda' \left( {n \choose 2} - |E| \right) \\
	&= \cut(S^*) - \lambda^* |S^*| |\bar{S}^*| - \varepsilon |S^*| |\bar{S}^*| + \lambda' \left( {n \choose 2} - |E| \right) \\
	&= 0 - \varepsilon |S^*| |\bar{S}^*| + \lambda^* \left( {n \choose 2} - |E| \right) + \varepsilon \left( {n \choose 2} - |E| \right)  \\
	&= \lambda^* \left( {n \choose 2} - |E| \right) +  \varepsilon \left( {n \choose 2} - |E| - |S^*| |\bar{S}^*|  \right)  \\
	 &= \lambda^* \left( \frac{n(n-1)}{2} - \frac{nd}{2} \right) +  \varepsilon \left( \frac{n(n-1)}{2} - \frac{nd}{2} - |S^*| |\bar{S}^*|    \right)  \\
	&\geq \lambda^* \left( \frac{n(n-1)}{2} - \frac{nd}{2} \right) +  \varepsilon \left( \frac{n(n-1)}{2} - \frac{nd}{2} - \frac{n^2}{4}  \right)  \\
	& \geq \frac{c}{n} \left( \frac{n(n-1)}{2} - \frac{nd}{2} \right) = \Omega(n)\,,
	\end{align*}
	relying on the definition of $\lambda^*$, the fact that $|E| = nd/2$ in this expander graph,
	and the bound $|S^*| |\bar{S}^*| \leq n^2/4$.
	
	\paragraph{Upper bounding the NEPPC LP}
	We now show that a carefully crafted feasible solution for the NEPPC LP~\eqref{neppc} has score $O(n/ \log n)$. Let~$\dist(i,j)$ denote the minimum path length between nodes~$i$ and~$j$ in~$G$, based on unit-weight edges~$E$.
	We are assuming the graph is connected, so each~$\dist(i,j)$ is a finite integer.
	(If the graph is not connected, we ought to solve \lcc on each connected component separately.)
	Consider the following setting of values~$x_{ij}$:
	\[ x_{ij} = \begin{cases}
	{2}/(\log_d n)
	& \text{if $(i,j) \in E$} \\
	1
	& \text{if $(i,j) \notin E$ and $\dist(i,j) \geq (\log_d n) /2$} \\
	0
	& \text{if $(i,j) \notin E$ and $\dist(i,j) <(\log_d n) /2$}\,.
	\end{cases}
	\]
	We show that this is feasible for the NEPPC LP~\eqref{neppc}.
	Since all (positive) edges are assigned the same LP score, the NEPPC constraints are satisfied at a (negative) non-edge,
	$(i,j)$, if and only if $x_{ij} \leq \dist(i,j) \cdot 2/(\log_d n)$.
	When $\dist(i,j)$ is less than $\log_d(n)/2$,  $x_{ij} = 0$, so this inequality is trivially true.
	When $\dist(i,j)$ is at least $\log_d(n)/2$,
	the NEPPC inequality is true because $\dist(i,j) \cdot 2/(\log_d n)$ is at least~$1$, which is~$x_{ij}$.
	
	For constant~$d$, the contribution from the (positive) edges  to LP~\eqref{neppc} is:
	\[ (1-\lambda')|E|{2}/({\log_d n}) = {(1-\lambda') (nd)}/({\log_d n}) = O(n/\log n)\,.\] 
	From the (negative) non-edges, since the factor is~$1-x_{ij}$, we only have a non-zero contribution from the set of $(i,j) \notin E$ such that
	$\dist(i,j) < (\log_d n)/2 = \log_d \sqrt{n}$.
	For each node $v \in V$, there are at most $d^{\log_d  \sqrt{n} } = \sqrt{n}$ nodes within this distance; the total
	number of non-edges that contribute to the LP cost is therefore in~$O(n \sqrt{n})$.
	Each has a weight~$\lambda' < 2\lambda^*$, so
	\[ \text{LP contribution of non-edges} \leq \lambda' n \sqrt{n} \leq  (2d/(n-1)) \, n \sqrt{n} = O(\sqrt{n}) \leq O(n/ \log n).\]
	Therefore, the total LP cost corresponding to this feasible solution to NEPPC LP~\eqref{neppc} is $O(n/ \log n)$.
	Since the optimal \lcc solution has cost~$\Omega(n)$, we have shown that there exists some $\lambda < 1/2$ such that the LP relaxation~\eqref{eq:lcc} has an integrality gap of $O(\log n)$.
\end{proof}

\subsection{Parameter-Dependent Approximation Guarantees}
We now describe improved approximation guarantees for ranges of~$\lambda$ below~$1/2$, extending the
analysis of our previous $3$-approximation for $\lambda \geq 1/2$~\cite{veldt2018correlation}.
This $3$-approximation is obtained by solving the LP relaxation, forming a new unweighted signed graph~$G'$,
and then applying the \emph{pivoting} procedure, which repeatedly selects a node and clusters it with its positive neighbors. The approximation guarantee comes from applying a theorem of van Zuylen and Williamson for deterministic pivoting algorithms for correlation clustering~\cite{zuylen2009deterministic}. We give a full proof of the following result in the Appendix:
\begin{theorem}
	\label{thm:lamcc}
	Let~$(x_{ij})$ be the variables from solving the \lcc LP relaxation,
	and form a new unweighted \cc input~$G'$ by putting a positive edge between~$i$ and~$j$, if $x_{ij} \leq 1/3$ and a negative edge otherwise.
	Applying a pivoting algorithm to~$G'$ yields a clustering that is a $3$-approximation for~$\lambda > 1/2$, and an
	$\alpha$-approximation otherwise, where $\alpha = \max  \{\,{1}/{\lambda}, ({6-3 \lambda})/({1 +\lambda})  \}$. 
\end{theorem}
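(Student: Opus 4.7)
My plan is to apply the deterministic-pivoting framework of van Zuylen and Williamson~\cite{zuylen2009deterministic}, which underlies the existing $3$-approximation for \lcc with $\lambda \geq 1/2$~\cite{veldt2018correlation}. In that framework, once LP~\eqref{eq:lcc} is solved to produce fractional values $(x^*_{ij})$ and the $1/3$ threshold is applied to form $G'$, running pivoting on $G'$ yields a clustering whose \lcc cost is at most $\alpha$ times the LP optimum, provided that a per-triangle inequality holds with constant $\alpha$ for every ``bad triangle'' of $G'$ (a triple with two positive and one negative edge in $G'$). This per-triangle inequality relates the \lcc mistake cost that pivoting can incur on the triangle to the LP contribution from the same three pairs, measured in the original \lcc weights. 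The real task is to compute the worst value of this ratio.

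First, I would fix a bad triangle and, WLOG, let $(i,j), (i,k)$ be the positive $G'$-edges with $x^*_{ij}, x^*_{ik} \leq 1/3$, and $(j,k)$ be the negative one with $x^*_{jk} > 1/3$. Combined with the LP triangle constraint $x^*_{jk} \leq x^*_{ij} + x^*_{ik}$, this forces $x^*_{ij}+x^*_{ik} > 1/3$ and $x^*_{jk} \leq 2/3$; these are the only structural facts I will use.

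Next, I would enumerate the sign configurations of the three pairs in the original \lcc instance. Each pair is independently in $E^+$ (contributing $(1-\lambda)x^*$ to the LP, and incurring a mistake of cost $1-\lambda$ if separated) or in $E^-$ (contributing $\lambda(1-x^*)$ to the LP, and incurring a mistake of cost $\lambda$ if clustered together). The $j \leftrightarrow k$ symmetry of the triangle reduces this to a small number of distinct cases. In each case I would write down the numerator (the sum of mistake costs across the three pivot choices $i, j, k$), the denominator (the LP contribution from the three pairs), and minimize the LP over the feasible region from the previous paragraph. The worst case across configurations determines $\alpha$.

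Finally, I would identify which cases dominate. For $\lambda > 1/2$ the configuration $(i,j),(i,k)\in E^+,\,(j,k)\in E^-$ governs and the LP minimization exactly recovers the $3$-approximation. For $\lambda \leq 1/2$ the same case gives $(6-3\lambda)/(1+\lambda)$, because the sign of $1-2\lambda$ flips the direction in which the LP is minimized over the feasible $(x^*_{ij},x^*_{ik},x^*_{jk})$ region. For sufficiently small $\lambda$, however, a different sign configuration---one in which the LP mass sits on negative-edge contributions that shrink linearly in $\lambda$ while the mistake cost remains $\Theta(1)$---produces the ratio $1/\lambda$, and the answer is the maximum. The main obstacle will be this final step: pinning down the configuration that realizes $1/\lambda$, ruling out any configuration that gives a strictly worse ratio, and verifying that the piecewise LP minimum over the triangle-constrained region is tight in each case, especially near the crossover where $1/\lambda = (6-3\lambda)/(1+\lambda)$.
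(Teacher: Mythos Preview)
Your plan is essentially the same as the paper's: both invoke the van~Zuylen--Williamson deterministic-pivoting theorem, reduce to verifying the per-edge and per-bad-triangle inequalities with LP costs $c_{ij}$, and then case-split on the original $E^+/E^-$ signs of the three pairs to extract the worst ratio. The paper pins down your ``missing'' configuration for you: the $1/\lambda$ bound comes from the case where one $F^+$ edge lies in $E^+$, the other $F^+$ edge lies in $E^-$, and the $F^-$ edge lies in $E^-$ (the remaining sign patterns only require $\alpha\geq 3$ and are therefore subsumed), so once you fill in that case and note that the per-edge condition also needs only $\alpha\geq 3$, your argument matches the paper's.
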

This theorem implies an approximation better than~$4.5$ for all~$\lambda \in (0.2324,0.5)$, but shows that the algorithm
performs worse and worse as~$\lambda$ decreases. However, for all~$\lambda$ in $\omega(1/\log n)$, this outputs a
better result than the standard, $O(\log n)$, rounding scheme.

\subsection{Two-Cluster LambdaCC}
Before moving on we present a theorem which implies an approximation guarantee and a hardness result for the two-cluster variant of \lcc.
\begin{theorem}
	\label{thm:2lcc}
	Two-cluster \lcc can be reduced to the weighted \muncut problem. An instance of \muncut with non-zero optimum can be reduced to an instance of two-cluster \lcc whose objective score for any clustering differs by at most a small constant factor.
\end{theorem}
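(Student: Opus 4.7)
The plan is to prove the two reductions separately. For the forward direction (two-cluster \lcc $\to$ weighted \muncut), I would use a double-cover gadget. Given an instance on $G = (V,E)$ with parameter~$\lambda$, form a weighted graph $H$ on vertex set $V' = \{v^+,v^- : v \in V\}$ with three families of edges: for each positive edge $(i,j) \in E$, a \emph{crossing} edge $(i^+,j^-)$ of weight $1-\lambda$; for each negative edge $(i,j) \notin E$, a \emph{parallel} edge $(i^+,j^+)$ of weight~$\lambda$; and for each $v \in V$, an \emph{anchor} edge $(v^+,v^-)$ of weight~$W$. A bipartition of $V'$ that cuts every anchor induces a two-clustering of $V$ via $\sigma_v = +1$ if $v^+$ lies in the first part and $-1$ otherwise. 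A short case check shows that the crossing edge $(i^+,j^-)$ is uncut iff $\sigma_i \ne \sigma_j$ (matching the LambdaCC cost of cutting a positive edge), and the parallel edge $(i^+,j^+)$ is uncut iff $\sigma_i = \sigma_j$ (matching the LambdaCC cost of leaving a negative edge uncut). Thus \muncut on $H$ (discounting the contribution of the anchors) equals \lcc on $G$. Setting $W$ larger than the trivial LambdaCC cost times the intended approximation factor---e.g.\ $W = n^3$---forces any reasonable \muncut solution to cut every anchor, so the reduction preserves approximation guarantees.

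For the reverse direction (\muncut $\to$ two-cluster \lcc), given a Min Uncut instance on $H = (V,E_H)$ with $u^\ast \ge 1$, form the \lcc instance on the complement graph $G = \bar H$ with parameter $\lambda = 1 - \varepsilon/n^2$ for a small constant $\varepsilon > 0$. Writing $u(S)$ for the \muncut value and $c^{\bar H}(S) = |S||\bar S| - (|E_H| - u(S))$ for the number of non-$H$ edges cut by $S$, a direct calculation gives
\[
\mathrm{LambdaCC}(S) \;=\; (1-\lambda)\,c^{\bar H}(S) + \lambda\, u(S) \;=\; u(S) + (1-\lambda)\bigl(|S||\bar S| - |E_H|\bigr).
\]
The bias term is bounded in absolute value by $(1-\lambda)\,n^2/2 = \varepsilon/2$, and the hypothesis $u^\ast \ge 1$ ensures $u(S) \ge 1$ for every~$S$. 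Hence the two objectives agree up to an additive $\varepsilon/2$ on every clustering, and therefore up to a multiplicative constant arbitrarily close to~$1$, giving the claimed constant-factor reduction in both directions.

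The main obstacle is the reverse direction, where \lcc's rigid weight structure leaves essentially only the choice of~$\lambda$ to tune. Without a non-zero-optimum hypothesis the bias term could swamp $u(S)$; the assumption $u^\ast \ge 1$ is exactly what keeps the factor constant, which is why the theorem states it explicitly. In the forward direction the main subtlety is quantitative: one must pick $W$ large enough, relative to the intended approximation factor, that any returned \muncut solution is still forced to cut every anchor, which I expect to reduce to a routine polynomial bound on the total non-anchor weight.
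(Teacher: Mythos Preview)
Your argument is correct in both directions, but it differs from the paper's in interesting ways.

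\textbf{Forward direction.} The paper simply invokes the classical gadget that turns two-cluster \ccs into \muncut: replace each positive edge $(u,v)$ by a path $(u,w,v)$ of two (negative) edges through a fresh node~$w$, so that cutting both new edges (which is optimal for those two) forces $u,v$ to the same side. This preserves the objective up to an additive constant and needs no large-weight tuning. Your double-cover gadget with heavy anchor edges also works---your case check on crossing and parallel edges is correct, and the choice $W=n^3$ does force any $O(\sqrt{\log n})$-approximate solution to cut every anchor---but it is more elaborate than necessary and introduces the extra bookkeeping of choosing~$W$ against the intended approximation factor.

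\textbf{Reverse direction.} Both proofs use the same complement construction ($E^- = E_H$, $E^+$ the non-edges of~$H$) and differ only in the choice of~$\lambda$. The paper picks $\lambda = |E^+|/(1+|E^+|)$, so that $(1-\lambda)|E^+| = \lambda$; this bounds the total positive-edge contribution by~$\lambda$ and yields $\lambda\,\textsc{Uncut}(\CC) \le \textsc{LamCC}(\CC) \le 2\lambda\,\textsc{Uncut}(\CC)$, a clean factor-$2$ sandwich. Your choice $\lambda = 1 - \eps/n^2$ drives the bias term $(1-\lambda)\bigl(|S||\bar S| - |E_H|\bigr)$ below~$\eps/2$ in absolute value, so under the hypothesis $u^\ast \ge 1$ you get a multiplicative factor arbitrarily close to~$1$. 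Your version is therefore quantitatively sharper, at the cost of the ``small constant factor'' in the theorem statement no longer being a single explicit number.

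In short: your forward reduction is heavier machinery than the paper's one-line citation of a standard gadget, while your reverse reduction is a genuine improvement on the paper's factor-$2$ bound.
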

\begin{proof}
	When we restrict to forming only two clusters, it is known that there is a direct equivalence between the max-agree objective for general \cc
	(where edges are unweighted, but some pairs of nodes might have no edge) and \maxcut~\cite{coleman20082cc}.
	A \maxcut instance can be viewed simply an instance of 2-\ccs with only negative edges,
	and an instance of 2-\ccs can be converted into an instance of \maxcut by replacing each negative edge with a pair of positive edges meeting at a new node.
	We observe that, by the same reductions, the minimization version of 2-\ccs is equivalent to \muncut, implying an $O(\sqrt{\log n})$ approximation for this objective~\cite{agarwal2005uncut}.
	Since this approximation result holds for weighted \muncut as well (see e.g., \cite{kale2007efficient}), we know we can reduce two-cluster \lcc to weighted \muncut to obtain an~$O(\sqrt{\log n})$ approximation. This has important ramifications even without the restriction on the number of clusters; \lcc is guaranteed to form two clusters for a certain parameter regime near~$\lambda^*$~\cite[Theorem~3.2]{veldt2018correlation}.
	
	In fact, \muncut can be reduced to a special case of two-cluster \lcc in the following way.
	Given a graph $G = (V,E)$, on which we wish to perform \muncut, construct a signed graph $G' = (V,E^+, E^-)$ by setting $E^- = E$ and $E^+ = (V \times
	V)\backslash E$. Give all edges in $E^+$ weight $(1-\lambda)$ and edges in $E^-$ weight $\lambda$, where $\lambda = |E^+|/(1+|E^+|)$ is chosen so that
	$\lambda/(1-\lambda) = |E^+|$.
	Let~$\CC$ encode a two-clustering of the nodes, let~$P(\CC)$ be the number of positive edge mistakes made by~\CC, and~$N(\CC)$ denote the number of
	negative mistakes.
	Then the \lcc objective corresponding to the clustering is $\textsc{LamCC}(\mathcal{C}) = (1-\lambda)P(\mathcal{C}) + \lambda N(\mathcal{C})$ and the number of
	edges in~$G$ that~$\mathcal{C}$ does not cut is $\textsc{Uncut}(\mathcal{C}) = N(\mathcal{C})$.
	Assuming~$G$ is not bipartite (in which case we could solve \muncut on~$G$, as well as 2-\lcc on~$G'$, in polynomial time),
	we know that $N(\mathcal{C}) \geq 1$, which means $\textsc{LamCC}(\mathcal{C}) \geq \lambda $.
	By our choice of~$\lambda$, we also note that $(1-\lambda)P(\mathcal{C}) \leq (1-\lambda) |E^+| = \lambda \leq \lambda\, \textsc{Uncut}(\mathcal{C})$. Thus,
	\[ \lambda\,\textsc{Uncut}(\mathcal{C}) \leq \textsc{LamCC}(\mathcal{C}) =
	(1-\lambda)P(\mathcal{C}) + \lambda \textsc{Uncut}(\mathcal{C}) \leq 2\lambda \textsc{Uncut}(\mathcal{C})\,, \]
	so the \lcc objective on~$G'$ is within factor two of the \muncut objective on~$G$, after scaling it by a factor of~$\lambda$ (which, unless graph~$G$
	is almost complete, is close to~$1$).
	Since we know it is NP-hard to approximate \muncut to within a constant factor if the Unique Games conjecture is true~\cite{khot2002on,khot2005ugc}, the same hardness result holds in general for 2-\lcc.
\end{proof}

\section{Motif Correlation Clustering}
\label{sec:motifcc}
We now turn to improved approximations for \mcc.
We begin by presenting a~$4(k-1)$ approximation algorithm for the problem for hyperedges of degree~$k$ with edge weights satisfying probability constraints. We then consider a first step towards algorithms that do not rely on solving an
expensive LP relaxation, by showing how to obtain a combinatorial approximation for two-cluster \mcc (2-\mcc) for complete, unweighted instances.

\subsection{The $4(k-1)$ approximation}
Our algorithm for \mcc is closely related to the approach of Li et al.~\cite{li2017motif} and directly generalizes the
LP-rounding technique of Charikar et al.~\cite{charikar2005clustering}, which is itself an instantiation of the more general rounding procedure given in Algorithm~\ref{alg:cgw}.
The general algorithm forms clusters based on threshold parameters~$\gamma$ and~$\delta$, which are part of the input.
\begin{algorithm}[tb]
	\caption{Generalized CGW for Minimizing Hyper-Disagreements}
	\begin{algorithmic}[5]
		\State{\bfseries Input:} Signed hypergraph $G = (V,E_k)$, 
		and threshold parameters~$\gamma$ and~$\delta$
		\State Solve the LP-relaxation of ILP~\eqref{eq:ILP}, obtaining \emph{distances}~$(x_{ij})$
		\State $W \gets V$, $\mathcal{C} \gets \varnothing$
		\While{$W\neq \varnothing$ }
		\State Choose $u \in W$ arbitrarily, and define $T_u \gets \{i \in W\backslash\{u\}: x_{ui} \leq \gamma \}$
		\If{$\sum_{i \in T_u} x_{ui} < \gamma\delta |T_u|$}
		\,$S := \{u\} \cup T_u$
		\Else {\,\, $S := \{u\}$}
		\EndIf 
		\State $\mathcal{C} \gets \mathcal{C} \cup \{S \}$, $W \gets W\backslash S $
		\EndWhile
	\end{algorithmic}
	\label{alg:cgw}
\end{algorithm}
Charikar et al.\ proved that for the $k=2$ unweighted case of \mcc,
setting $\gamma = \delta = 1/2$ leads to a $4$-approximation.
Li et al.\ generalized this to obtain a $9$-approximation for $k = 3$ in the more general probability constrained case, by selecting $\gamma = \delta = 1/3$~\cite{li2017motif}.
Although they did not provide an analysis for motifs of size $k > 3$, it appears that their strategy of setting $\gamma = \delta = 1/k$ would at best lead to a~$k^2$ approximation. In contrast, we analyze a choice of parameters which leads to an approximation that is
\emph{linear} in~$k$.

The result is somewhat detailed, and we begin with some notation.
Let the family of~$k$-tuples be~$E_k$, and let $W \subseteq V$ be the subset of nodes in $G$ that remain unclustered after a certain number of rounds of Algorithm~\ref{alg:cgw}.
When considering a vertex~$u \in W$ and a specific $k$-tuple $\ee$, it will be convenient to define~$a$ to be the node in~$\ee$ \emph{closest} to~$u$,
i.e., $\arg \min_{i \in \ee} x_{ui}$, while~$z$ is the \emph{farthest}, $\arg \max_{i \in \ee} x_{ui}$.
We have~$T_u$ similar to Algorithm~\ref{alg:cgw}, with $\gamma = 1/(2(k-1))$,
while~$T_u^k$ are those $k$-tuples that include~$u$, with all non-$u$ nodes in~$T_u$:
\begin{equation}
\label{eq:tuk}
T_u =  {\textstyle \left \{i \in W\backslash \{u\}: x_{ui} \leq \frac{1}{2(k-1)} \right\}} \quad \text{and} \quad T_u^k = \{
\ee \in E_k : u \in \ee \text{ and }  (\ee - \{u\}) \subset T_u \}\,.
\end{equation}
For~$z \notin T_u$, we let~$P_z$ be those $k$-tuples in which~$z$ is the farthest element from~$u$ and some~$a \in T_u$ is
closest, viz.                                                                   
\begin{equation}
\label{eq:pz}
P_z = \{  (a, j_2, j_3, \ldots ,j_{k-1}, z) \in E_k :  a\in T, x_{ua} \leq x_{u,j_2} \leq x_{u,j_3} \leq
\cdots \leq x_{uz} \}\,. 
\end{equation}
Finally,~$\LP(A)$ denotes the LP score associated with a subset~$A$ of the set of degree-$k$ hyperedges: $A \subseteq E_k$.
\begin{theorem} 
	\label{thm:motifthm}
	For constant~$k$, let $G = (V,E_k)$ be a hypergraph in which for all $\ee \in E_k$ the weights satisfy probability constraints, $w_\ee^+ + w_\ee^- = 1$. Applying Algorithm~\ref{alg:cgw} with $\gamma = 1/(2(k-1))$ and $\delta = 1/2$ outputs a clustering that is a $4(k-1)$-approximation to \mcc.
\end{theorem}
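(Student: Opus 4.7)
The plan is to follow the Charikar--Wirth pivot-charging framework, generalized to hypergraphs by Li et al., but with the new parameters $\gamma = 1/(2(k-1))$, $\delta = 1/2$ chosen so the accumulated ratio scales \emph{linearly} in $k$. At each iteration of Algorithm~\ref{alg:cgw} we charge the hyperedges whose clustering fate is decided in that iteration to a local ``witness'' LP cost, and then show iteration by iteration that $\ALG(u) \leq 4(k-1) \cdot \LP(u)$. A standard bookkeeping argument (as in the $k=2$ case) ensures that each $\ee \in E_k$ is charged in exactly one iteration, so summing over iterations yields the global approximation guarantee.

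Fix an iteration with pivot $u$ and split on the branch taken. In the \emph{singleton} case, $\sum_{i \in T_u} x_{ui} \geq \gamma\delta |T_u|$ and the algorithm pays $w_\ee^+$ on every $\ee \in T_u^k$. Using $x_\ee \geq x_{ui}$ for each $i \in \ee \setminus \{u\}$, averaging, and double-counting each $i \in T_u$ across the $\binom{|T_u|-1}{k-2}$ hyperedges in $T_u^k$ containing it, one obtains
\[
\sum_{\ee \in T_u^k} x_\ee \;\geq\; \frac{1}{k-1}\binom{|T_u|-1}{k-2}\sum_{i \in T_u}x_{ui} \;\geq\; \gamma\delta\,|T_u^k| \;=\; \frac{|T_u^k|}{4(k-1)}.
\]
A short case split using the probability constraint $w_\ee^+ + w_\ee^- = 1$ (handling $x_\ee \geq 1/(4(k-1))$ via the $w_\ee^+ x_\ee$ term and the remainder via $w_\ee^-(1-x_\ee)$) then yields $\sum_{\ee \in T_u^k} w_\ee^+ \leq 4(k-1)\,\LP(T_u^k)$.

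In the \emph{cluster} case, $S = \{u\}\cup T_u$ is formed and two sub-accounts are needed. First, each $\ee \in T_u^k$ is merged and pays $w_\ee^-$; invoking the LP constraint $(k-1)x_\ee \leq \sum_{v,w \in \ee} x_{vw}$ together with $x_{ij} \leq x_{ui}+x_{uj}$ for $i,j \in T_u$ gives $x_\ee \leq (k-1)\gamma = 1/2$, so $1 - x_\ee \geq 1/2$ and hence $w_\ee^- \leq 2\,\LP(\ee)$, absorbed in the $4(k-1)$ factor. Second, for each $z \in W \setminus T_u$, every $\ee \in P_z$ is split and pays $w_\ee^+$; here we exploit $x_\ee \geq x_{az} \geq x_{uz} - x_{ua}$ combined with $x_{uz} > \gamma$ and the hypothesis $\sum_{i \in T_u} x_{ui} < \gamma\delta|T_u|$, which forces $x_{ua}$ to be small on average for the contributing edges, so that $x_\ee$ is bounded below by a quantity proportional to $\gamma(1-\delta) = 1/(4(k-1))$. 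This again gives the desired $4(k-1)$ ratio after combining with the probability constraint.

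The technical heart of the proof, and the step I expect to be the main obstacle, is the $P_z$ bound in the cluster case. Earning precisely $4(k-1)$ rather than the $k^2$ factor implicit in the Li et al.\ choice of $\gamma = \delta = 1/k$ requires exploiting both LP constraints on $x_\ee$ simultaneously and distinguishing the ``closest-to-$u$'' coordinate $a$ from the other $k-2$ intermediate vertices $j_2,\dots,j_{k-1}$, so that the amortized LP contribution per split hyperedge scales with $\gamma$ rather than with $\gamma^2$. This is exactly where the asymmetric choice $\gamma = 1/(2(k-1))$, $\delta = 1/2$ pays off, and is what breaks the quadratic dependence on $k$.
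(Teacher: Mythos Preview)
Your overall charging framework matches the paper's, but several steps as written would fail.

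In the singleton branch, the per-edge case split does not work: if $w_\ee^+ = 1$ (so $w_\ee^- = 0$) and $x_\ee < 1/(4(k-1))$, neither term pays for the mistake. The fix is to first observe that for every $\ee \in T_u^k$ one has $x_\ee \leq \sum_{i \in \ee} x_{ui} \leq (k-1)\gamma = 1/2$, hence $1 - x_\ee \geq x_\ee$ and therefore $\LP(\ee) \geq (w_\ee^+ + w_\ee^-)x_\ee = x_\ee$ uniformly; summing and using your averaged bound then compares $|T_u^k| \geq \sum w_\ee^+$ against $\sum x_\ee \geq |T_u^k|/(4(k-1))$. In the merge branch you only account for $\ee \in T_u^k$, but every $\ee \subset \{u\} \cup T_u$ is merged, including those with $u \notin \ee$; for such $\ee$ one only gets $x_\ee \leq k\gamma = k/(2(k-1)) \leq 3/4$ (for $k \geq 3$), so the factor here is $4$, not $2$.

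The more serious gap is in the $P_z$ analysis. You invoke only the lower bound $x_\ee \geq x_{uz} - x_{ua}$, but to lower-bound $w_\ee^-(1-x_\ee)$ you need an \emph{upper} bound on $x_\ee$, namely $x_\ee \leq x_{ua} + (k-1)x_{uz}$ (from $(k-1)x_\ee \leq \sum_{i,j \in \ee} x_{ij}$ and triangle inequalities through $u$). Since $w_\ee^+$ may vanish on some edges of $P_z$, the $W_z^-$ contribution cannot be dropped. Using both bounds yields $\LP(P_z) \geq W_z^+ x_{uz} + W_z^-\bigl(1 - (k-1)x_{uz}\bigr) - \sum_{\ee \in P_z} x_{u a_\ee}$, after which your ``small on average'' observation (Lemma~\ref{lem:hardest}) controls the last sum by $(W_z^+ + W_z^-)/(4(k-1))$. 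But now the coefficient of $W_z^-$ is nonnegative only when $x_{uz} < 3/(4(k-1))$; for $x_{uz} \geq 3/(4(k-1))$ one must instead account individually via $x_\ee \geq x_{uz} - x_{ua} \geq 3/(4(k-1)) - 1/(2(k-1)) = 1/(4(k-1))$. This split on the range of $x_{uz}$, and the simultaneous use of both LP bounds on $x_\ee$, are exactly what make the linear-in-$k$ factor go through, and both are missing from your sketch.
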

We start with a proof outline, establish three lemmata, and then give full details in Section~\ref{sec:motifproof}. At each step the algorithm forms a cluster $S_u$ around an arbitrary $u \in W$. This cluster is associated with a set of hyperedges $A_u$ that have either been cut or placed inside of $S_u$. If for each $S_u$ individually we can show that mistakes made at $A_u$ are within a fixed factor of the lower bound $LP(A_u)$, this will imply an overall bound for the entire clustering.

In forming a cluster around $u$, the algorithm first identifies a set of nodes $T_u$ whose LP distance to $u$ is at most a preliminary threshold $\gamma = 1/(2(k-1))$. To verify if $\{u\}\cup T_u$ will make a good cluster, the algorithm checks whether \emph{on average} the distance from $u$ to $T_u$ is below a tighter threshold $\gamma\delta = 1/(4(k-1))$. If this doesn't hold, we let $\{u\}$ remain a singleton cluster. In forming clusters, we only explicitly consider distance variables $x_{ij}$ for $(i,j) \in V \times V$. However, the \mcc objective and its LP relaxation both depend on the hyperedge variables $x_\ee$ for $\ee \in E_k$. Therefore, in order to bound the weight of hyperedge mistakes we must leverage the LP constraints to understand the relationships between distance and hyperedge variables. Lemma~\ref{lem:bounds} establishes several useful relationships we will need later. Also, because our algorithm makes decisions based on the average distance between $u$ and $T_u$, we must interpret what this means for the average value of hyperedge variables $x_\ee$ in certain sets of hyperedges that we are trying to account for (e.g. $P_z$ and $T_u^k$ in~\eqref{eq:tuk} and~\eqref{eq:pz}). Lemmata~\ref{lem:singleton} and~\ref{lem:hardest} address this task. In the following, we adopt the convention that $x_{ii} = 0$ for every node~$i \in V$.
\begin{lemma}
	\label{lem:bounds} 
	For all $\ee \in E_k$ and any $u \in V$,
	\begin{enumerate}
		\item $x_\ee \leq \sum_{i \in \ee} x_{ui}$,
		\item $x_\ee \leq x_{ua} + (k-1)x_{uz}$, and
		\item $x_\ee \geq x_{uz} - x_{ua}$.
	\end{enumerate}
\end{lemma}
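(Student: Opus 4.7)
The plan is to derive each of the three inequalities directly from the LP constraints of ILP~\eqref{eq:ILP} that tie pairwise variables $x_{ij}$ to the hyperedge variable $x_{\ee}$, combined with the triangle constraint $x_{ij} \leq x_{ik} + x_{jk}$. Throughout I use the convention $x_{ii}=0$, which keeps things consistent when $u$ happens to coincide with a member of $\ee$.

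For bound~1, I would start from the ILP constraint $(k-1)x_{\ee} \leq \sum_{i,j\in\ee,\, i<j} x_{ij}$ and upper-bound each pairwise term by $x_{ij}\leq x_{ui}+x_{uj}$, introducing $u$ as the intermediate vertex. Summing over all $\binom{k}{2}$ pairs makes each node $i\in\ee$ contribute $x_{ui}$ exactly $k-1$ times, so the total becomes $(k-1)\sum_{i\in\ee}x_{ui}$. Dividing both sides by $k-1$ yields $x_{\ee}\leq \sum_{i\in\ee}x_{ui}$.

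Bound~2 is then an immediate corollary: split the sum from bound~1 as $x_{ua}$ plus the remaining $k-1$ terms, and bound each of those terms by $x_{uz}$ using the definition of $z=\arg\max_{i\in\ee} x_{ui}$. Bound~3 goes in the opposite direction. Apply the ILP constraint $x_{\ee}\geq x_{ij}$ to the specific pair $\{a,z\}\subseteq \ee$ to obtain $x_{\ee}\geq x_{az}$, and then use the triangle inequality $x_{uz}\leq x_{ua}+x_{az}$, rearranged as $x_{az}\geq x_{uz}-x_{ua}$, to conclude.

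There isn't really a hard step here: the lemma is a compact packaging of three one- or two-line consequences of the LP. The only thing to watch is internal consistency, i.e.\ making sure the roles of $a$ (closest to $u$) and $z$ (farthest from $u$) are used only where their extremal property is actually invoked, and that the convention $x_{ii}=0$ is explicit so the bounds make sense when $u\in\ee$.
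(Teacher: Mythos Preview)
Your proposal is correct and matches the paper's proof essentially line for line: bound~1 via the constraint $(k-1)x_\ee \le \sum_{i<j} x_{ij}$ combined with the triangle inequality through $u$, bound~2 as its immediate corollary, and bound~3 from $x_\ee \ge x_{az} \ge x_{uz}-x_{ua}$. The paper also adopts the convention $x_{ii}=0$, so your remark about the case $u\in\ee$ is on point.
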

\begin{proof}
	By the triangle inequality,
	$$\sum_{(i,j)} x_{ij} \leq \sum_{(i,j)} (x_{ui} + x_{uj}) = \sum_{i \in \ee} (k-1) x_{ui}\,.$$
	The fourth constraint in the LP relaxation of~\eqref{eq:ILP} states that~$(k-1) x_\ee \leq \sum_{(i,j)} x_{ij}$, so
	we can prove the first inequality in the Lemma.
	
	The second inequality in the Lemma follows from the first inequality and the definitions of~$a$ and~$z$.
	The third inequality arises from the first and third constraints in the LP relaxation: $x_\ee \geq x_{az} \geq x_{uz} - x_{ua}$.
\end{proof}

\begin{lemma}
	\label{lem:singleton} 
	For all~$u \in W \subseteq V$, 
	if $\,\sum_{i \in T_u} x_{ui} \geq \beta |T_u|$, then $\sum_{\ee \in T_u^k} x_{\ee} \geq \beta |T_u^k|$.
\end{lemma}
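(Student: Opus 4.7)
The plan is to lower-bound each hyperedge variable $x_{\ee}$ by an average of the distances $x_{ui}$ for $i\in\ee\setminus\{u\}$, then switch the order of summation. The key LP fact I would use is the third constraint of the relaxation of~\eqref{eq:ILP}: for every pair $v,w\in\ee$, $x_{vw}\le x_{\ee}$. Applied with $v=u$, this gives, for any $\ee=\{u,i_1,\dots,i_{k-1}\}\in T_u^k$,
\[
x_{\ee}\;\ge\;\max_{i\in\ee\setminus\{u\}} x_{ui}\;\ge\;\frac{1}{k-1}\sum_{i\in\ee\setminus\{u\}} x_{ui},
\]
where the second inequality is simply ``max dominates average.'' This is the whole substantive content of the proof; no appeal to the other constraints is needed.

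Summing this inequality over $\ee\in T_u^k$ and exchanging the order of summation, one arrives at
\[
\sum_{\ee\in T_u^k} x_{\ee}\;\ge\;\frac{1}{k-1}\sum_{i\in T_u} x_{ui}\cdot \bigl|\{\ee\in T_u^k : i\in\ee\}\bigr|.
\]
Each fixed $i\in T_u$ lies in exactly those $\ee\in T_u^k$ obtained by choosing the remaining $k-2$ members from $T_u\setminus\{i\}$, so the inner count is $\binom{|T_u|-1}{k-2}$, which is independent of $i$.

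To finish, I would use the elementary identity $|T_u^k|=\binom{|T_u|}{k-1}=\frac{|T_u|}{k-1}\binom{|T_u|-1}{k-2}$, so that
\[
\frac{1}{k-1}\binom{|T_u|-1}{k-2}=\frac{|T_u^k|}{|T_u|}.
\]
Plugging in and applying the hypothesis $\sum_{i\in T_u}x_{ui}\ge\beta|T_u|$ yields
\[
\sum_{\ee\in T_u^k} x_{\ee}\;\ge\;\frac{|T_u^k|}{|T_u|}\sum_{i\in T_u}x_{ui}\;\ge\;\beta\,|T_u^k|,
\]
which is the desired conclusion.

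There is no real obstacle here: the only part requiring thought is recognising that the pairwise constraint $x_{uv}\le x_{\ee}$ (rather than the fourth constraint, which goes the wrong way) delivers the needed lower bound, and that passing through the max-over-average step is enough to make the double-counting come out exactly with constant $1$. A degenerate case worth a brief mention is $|T_u|<k-1$, for which $T_u^k=\emptyset$ and the conclusion is vacuous.
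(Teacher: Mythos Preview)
Your proof is correct and is essentially the same as the paper's: both use the LP constraint $x_{ui}\le x_\ee$ for $u,i\in\ee$, then double-count over pairs $(i,\ee)$ with $i\in\ee\setminus\{u\}$, and finish with the identity $\binom{|T_u|}{k-1}=\frac{|T_u|}{k-1}\binom{|T_u|-1}{k-2}$. The only cosmetic difference is that the paper applies $x_\ee\ge x_{ui}$ termwise inside the double sum, whereas you route through $x_\ee\ge\max_i x_{ui}\ge\text{average}$; the latter is a harmless detour since $x_\ee\ge x_{ui}$ for each $i$ already gives $(k-1)x_\ee\ge\sum_{i\in\ee\setminus\{u\}}x_{ui}$ directly.
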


\begin{proof}
	The set~$T_u$ comprises nodes that are \emph{close} to~$u$, in LP distance, while $T_u^k$ is the set of all $k$-tuples consisting of node~$u$ plus
	$k-1$ nodes taken from the set~$T_u$.
	For a fixed $i \in T_u$, let $K_i$ be the set of $(k-2)$-tuples of nodes in~$T_u$ that exclude~$i$.
	Then for all $\mathcal{K} \in K_i$, $\ee_{u,i,\mathcal{K}} = \{u,i, \mathcal{K}\}$ is a $k$-tuple containing~$u$ and~$i$,
	with corresponding variable $x_{ \ee\{u,i, \mathcal{K}\}} \geq x_{ui}$.
	Note that $|K_i| = {|T_u|-1 \choose k-2}$, and if we iterate through each $i \in T_u$ and count up $k$-tuples of the form $\ee_{u,i,\mathcal{K}}$ for $\mathcal{K} \in K_i$, we will count each $k$-tuple in $T_u \cup \{u\}$ exactly $k-1$ times. Thus:
	\begin{align*}
	\sum_{\ee \in T_u^k } x_{\ee} 
	&= \frac{1}{k-1} \sum_{i \in T_u}\sum_{\mathcal{K} \in K_i} x_{ \ee\{u,i, \mathcal{K}\}} \geq \frac{1}{k-1}\sum_{i \in T_u}\sum_{\mathcal{K} \in K_i}  x_{ui} \\
	&= \frac{1}{k-1} \sum_{i \in T_u} { |T_u| - 1  \choose k-2} x_{ui} \geq \frac{1}{k-1}{ |T_u| - 1  \choose k-2} \beta |T_u| =  \beta {|T_u|
		\choose (k-1)} = \beta |T_u^k|\,.
	\end{align*}
\end{proof}

\begin{lemma}
	\label{lem:hardest} 
	For all $\ee \in P_z$, let~$a_\ee$ denote the node in~$\ee$ closest to~$u$.
	If $\sum_{i \in T_u}\, x_{ui} < \beta |T_u|$,
	then $\sum_{\ee \in P_z} x_{ua_\ee} < \beta |P_z|$.
\end{lemma}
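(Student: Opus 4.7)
The plan is to recast the sum on the left as a weighted combination of the values $x_{ua}$ over $a \in T_u$, and then argue that the weights are anti-monotone in $x_{ua}$. This turns the lemma into a Chebyshev-style rearrangement argument: the hypothesis says the unweighted average of $x_{ua}$ over $a \in T_u$ is strictly less than $\beta$, and anti-monotone weights mean a weighted average with more mass on the smaller values is no larger than the unweighted average, hence still strictly below $\beta$.

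Concretely, the first step is to regroup the sum by the identity of the closest element. For each $a \in T_u$, define $N_a = |\{\ee \in P_z : a_\ee = a\}|$, so that
\[ \sum_{\ee \in P_z} x_{ua_\ee} \ =\ \sum_{a \in T_u} N_a\, x_{ua}, \qquad |P_z| \ =\ \sum_{a \in T_u} N_a. \]
Next, I would give a counting formula for $N_a$: a $k$-tuple $\ee \in P_z$ with $a_\ee = a$ is determined by choosing the $k-2$ intermediate vertices $\{j_2,\ldots,j_{k-1}\}$ as a $(k-2)$-subset of
\[ M(a,z) \ =\ \{\, j \in V \setminus \{u,a,z\} : x_{ua} \leq x_{uj} \leq x_{uz}\,\}, \]
so $N_a = \binom{|M(a,z)|}{k-2}$. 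Ties among the $x_{ui}$ are handled by fixing, once and for all, a linear refinement of the distance order on $V$ and using it to disambiguate ``closest'' and ``farthest'' in $\ee$.

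The key monotonicity observation is that if $a, a' \in T_u$ with $x_{ua} \leq x_{ua'}$, then $M(a',z) \subseteq M(a,z)$, since widening the lower endpoint of the admissible interval can only add nodes. Hence $N_a \geq N_{a'}$, i.e.\ when $T_u$ is sorted by $x_{ua}$, the sequence $(N_a)_{a \in T_u}$ is non-increasing. Applying Chebyshev's sum inequality to the oppositely-ordered sequences $(x_{ua})_{a \in T_u}$ and $(N_a)_{a \in T_u}$ then yields
\[ |T_u|\sum_{a \in T_u} N_a\, x_{ua} \ \leq\ \Bigl(\sum_{a \in T_u} N_a\Bigr)\Bigl(\sum_{a \in T_u} x_{ua}\Bigr) \ <\ \Bigl(\sum_{a \in T_u} N_a\Bigr)\beta |T_u|, \]
and dividing through by $|T_u|$ gives exactly $\sum_{\ee \in P_z} x_{ua_\ee} < \beta |P_z|$.

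The main potential obstacle is the bookkeeping around tied distances: with only weak inequalities in the definition of $P_z$, one must verify that ``closest'' and ``farthest'' are unambiguous and that the counting formula for $N_a$ neither overcounts nor undercounts tied configurations. Committing to a fixed linear refinement of the $x_{u\cdot}$-order on $V$ and using it consistently throughout the definitions of $P_z$, $a_\ee$, and $M(a,z)$ should make this routine, leaving the Chebyshev step as the only substantive content.
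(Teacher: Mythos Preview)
Your argument is correct, and it takes a genuinely different route from the paper's proof. The paper partitions $P_z$ into layers $P_z^d$ according to how many of the $k$ nodes lie inside $T_u$; within each layer it replaces $x_{ua_\ee}$ by the average $\frac{1}{d}\sum_{i\in\dd} x_{ui}$ over the $d$-tuple $\dd\subseteq T_u$ (using $\min\le$ average), and then a double-counting argument over all $d$-subsets of $T_u$ converts this into the hypothesis $\sum_{i\in T_u} x_{ui}<\beta|T_u|$. Your proof instead groups $P_z$ by the identity of the closest node, observes the structural fact that $N_a=\binom{|M(a,z)|}{k-2}$ is anti-monotone in $x_{ua}$, and finishes with a single application of Chebyshev's sum inequality. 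The paper's route is entirely elementary and self-contained, while yours is shorter and isolates the conceptual point cleanly: closer anchors $a$ admit more completions, so the weighted average of $x_{ua}$ with weights $N_a$ cannot exceed the unweighted average. Both approaches implicitly assume $|P_z|>0$ (otherwise the strict inequality in the conclusion is vacuous), which is the only case needed in the application; your tie-breaking remark adequately handles the remaining bookkeeping.
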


\begin{proof}
	We partition~$P_z$ into different sets, based on how many nodes are inside~$T_u$ and how many are outside~$T_u$,
	and then prove the inequality holds separately for each individual set.
	Define~$P_z^d$ to be those tuples in~$P_z$ in which~$a,j_2,\ldots,j_d \in T_u$, but $j_{d+1},\ldots \notin T_u$, 
	for $d = 1, 2, \hdots, k-1$. Let $J_u^d = \{ (a, j_2, \ldots, j_d) \subset T_u\}$ denote these $d$-tuples of nodes inside $T_u$, so $|J_u^d| = { |T_u| \choose d}$. For any $d$-tuple $\dd =  (a, j_2, \hdots, j_d) \in J_u^d$, define $x_\dd = \frac{1}{d}( x_{ua} + x_{u,j_2} + \cdots + x_{u,j_d})$ and note that $x_{ua} \leq x_\dd$ since $a$ is the node in $\dd$ closest to $u$. Observe that any node $i \in T_u$ shows up in exactly ${|T_u| - 1 \choose d-1}$ of the $d$-tuples in $J_u^d$. Therefore:
	\begin{align*}
	\sum_{(a, \hdots, j_d) \in J_u^d} x_{ua} &\leq \sum_{(a, \hdots, j_d) \in J_u^d} x_{\dd} = \sum_{(a, \hdots, j_d) \in J_u^d} \frac{1}{d} (x_{ua} + \cdots + x_{u,j_d}) \\
	& = \frac{1}{d} \sum_{i \in T_u} {|T_u| - 1 \choose d-1} x_{ui}  < \frac{1}{d} {|T_u| - 1 \choose d-1} |T_u| \beta = {|T_u| \choose d}\beta = |J_u^d| \beta.
	\end{align*}
	
	For every set of $k-d-1$ nodes $(j_{d+1}, j_{d+2}, \hdots, j_{k-1})$ outside of~$T_u$, satisfying $x_{u,j_{d+1}} \leq x_{u,j_{d+2}} \leq \cdots \leq x_{uz}$, the $k$-tuple $(\dd, j_{d+1}, j_{d+2}, \hdots , z)$ is in $P_z^d$ for every $\dd \in J_u^d$.
	We are now ready to perform a sum over tuples in~$P_z^d$, to show the desired result:
	\[ \sum_{(a,j_2, \hdots , j_{k-1},z) \in P_z^d} x_{ua} = \sum_{ \substack{(j_{d+1},\hdots , j_{k-1},z):\\ \gamma < x_{u,j_{d+1}} \leq \cdots
			\leq x_{uz}}} \sum_{\dd \in J_u^d} x_{ua} <
	\sum_{ \substack{(j_{d+1},\hdots , j_{k-1}):\\ \gamma < x_{u,j_{d+1}} \leq \cdots \leq x_{uz}}} |J_u^d| \beta = |P_z^d| \beta\,,  \]
	where~$\gamma$ is the threshold defining~$T_u$.
	Since the desired inequality holds for each~$P_z^d$ and $P_z = \bigcup_{d = 1}^{k-1} P_z^d$, the full result follows.
\end{proof}

\subsection{Proof of Theorem~\ref{thm:motifthm}}
\label{sec:motifproof}
\begin{proof}
	We must account for the weight of positive mistakes made at singleton clusters,~$\{u\}$,
	and the weight of both positive and negative mistakes made at non-singleton clusters.

	\paragraph{Singleton Clusters}
	Consider a cluster $S = \{u\}$.
	The algorithm incurs a penalty~$w_{\ee}^+$ for each~$\ee$ such that~$u \in \ee$.
	If some node $j \in \ee - \{u\}$ is not in~$T_u$, then the contribution to the LP score is
	$w_{\ee}^+ x_\ee$, which is at least $w_{\ee}^+ x_{uj}$, and therefore exceeds~$w_{\ee}^+ /(2(k-1))$.
	Thus the cost of the mistake at most~$2(k-1)$ times the LP penalty.
	
	It remains to account for all positive hyperedges in~$T_u^k$.
	Even if $w_\ee^+ = 1$ for all $\ee \in T_u^k$,
	$|T_u^k| = \binom{|T|}{k-1}$ is an upper bound on the total weight of mistakes made on hyperedges in~$T_u^k$.
	By the first observation of Lemma~\ref{lem:bounds}, and because~$u \in \ee$,
	\[ {\textstyle x_\ee \leq \sum_{i \in \ee} x_{ui} \leq (k-1) \frac{1}{2(k-1)} = \frac{1}{2}\,, \quad \text{hence,} \quad
		(1- x_\ee) \geq x_\ee}.\, \]
	Since~$w_\ee^+ + w_\ee^- = 1$, we can lower bound the contribution of~$T_u^k$ to the LP score:
	\begin{align*}
	\LP(T_u^k) &= \sum_{\ee \in T_u^k} w_\ee^+ x_\ee + w_\ee^- (1-x_\ee)
	\geq \sum_{\ee \in T_u^k} w_\ee^+ x_\ee + w_\ee^- x_\ee =\sum_{\ee \in T_u^k} x_\ee 
	\geq |T_u^k| \frac{1}{4(k-1)}\,,
	\end{align*}
	by Lemma~\ref{lem:singleton}, so we have paid for the mistakes within a factor~$4(k-1)$.
	
	\paragraph{Negative Mistakes at Non-Singletons}
	Next, we account for negative mistakes in clusters of the form $S = \{u\}\cup T$.
	Charikar et al.~showed that, when~$k =2$, these are accounted for within a factor~$4$; we prove the same for all~$k \geq 3$.
	For each~$\ee \in E_k$ such that~$\ee \subset S$,
	the algorithm makes a mistake of weight~$w_\ee^-$.
	On the other hand, the LP pays $w_\ee^- (1-x_\ee)$. Applying the first observation in Lemma~\ref{lem:bounds},
	\[ {\textstyle x_\ee \leq \sum_{i \in \ee} x_{ui} \leq k \frac{1}{2(k-1)} \leq \frac{3}{4}\,, \quad \text{hence,} \quad
		w_\ee^- (1-x_\ee) \geq \frac{w_\ee^- }{4}\,,} \]
	and we have the desired result for~$k \geq 3$.
	\paragraph{Positive Mistakes at Non-Singletons}
	A hyperedge~$\ee$ contained entirely within~$S = \{u\} \cup T$ incurs no positive-weight error.
	So, finally, we account for positive mistakes at hyperedges~$\ee$ where at least one node of~$\ee$ is in $S$
	and at least one node in~$\ee$ is $\notin S$.
	For each such hyperedge, we explicitly label the nodes of~$\ee$ with indices
	$a = j_1 < j_2 < \cdots < j_k = z$, with $x_{ua} = x_{u,j_1} \leq x_{u,j_2} \leq \hdots \leq x_{u,j_k} = x_{uz}$ where $a \in T_u$ and $z \notin T_u$.
	By the second and third observation in Lemma~\ref{lem:bounds} we know that
	\begin{equation}
	\label{eq:ineq}
	x_{uz} - x_{ua} \leq x_\ee \leq x_{ua} + (k-1) x_{uz}\,,
	\end{equation}
	
	First, if $a = u$, then we know $w_\ee^+ x_\ee \geq w_\ee^+(x_{uz} - x_{uu}) > w_\ee^+/(2(k-1))$,
	and we have individually accounted for each such positive mistake within a factor~$2(k-1)$. 
	If $a \neq u$ and $x_{uz} \geq 3/(4(k-1))$, we bound the mistake within factor~$4(k-1)$:
	\[ w_\ee^+ x_\ee \geq w_\ee^+(x_{uz} - x_{ua}) \geq w_\ee^+(3/(4(k-1)) - 1/(2(k-1)) = w_\ee^+/(4(k-1))\,.\]
	
	Finally, if~$a \neq u$ and  $x_{uz} \in \left( \frac{1}{2(k-1)}, \frac{3}{4(k-1)}\right)$,
	we account for all positive weights associated with edges in the following set, together:
	\[P_z = \{ \ee \in E_k :  \ee = (a, j_2, \ldots , z), a\in T, x_{ua} \leq x_{u,j_2} \leq x_{u,j_3} \leq \cdots
	\leq x_{uz} \}\,. \]
	The weight of mistakes made by the algorithm is $W_z^+ = \sum_{p \in P_z} w_p^+$,
	and we also define $W_z^- = \sum_{p \in P_z} w_p^-$.
	We start by observing that, since~$x_{ua} \leq x_\ee$ and~$W_z^+ + W_z^- = |P_z|$, due to probability constraints on
	weights, Lemma~\ref{lem:hardest} tells us that~$\sum_{\ee \in P_z} x_{ua} < (W_z^+ + W_z^-)/(4(k-1))$.
	\begin{align*}
	LP(P_z) &= \sum_{\ee \in P_z} w_\ee^+ x_\ee + w_\ee^- (1-x_\ee) \\  								
	&\geq \sum_{\ee \in P_z} w_\ee^+ (x_{uz} - x_{ua}) + w_\ee^- (1 - x_{ua} - (k-1) x_{uz}) \hspace{.5cm}\text{(by inequalities in~\eqref{eq:ineq})} \\		
	&= \sum_{\ee \in P_z} w_\ee^+ x_{uz} + w_\ee^- (1 - (k-1) x_{uz}) -  \sum_{\ee \in P_z} x_{ua} \\ 
	&\geq 	{\textstyle W_z^+ x_{uz} + W_z^- (1- (k-1)x_{uz}) - \frac{W_z^+ + W_z^-}{4(k-1)} } \hspace{.5cm} \text{(by the starting
		observation)}\\		
	&\geq 	{\textstyle W_z^+ \left( \frac{1}{2(k-1)} - \frac{1}{4(k-1)} \right) + W_z^- \left( 1 - \frac{1}{4(k-1)} - (k-1) \frac{3}{4(k-1)} \right) } \geq 	{\textstyle W_z^+ \frac{1}{4(k-1)} }\,,
	\end{align*}
	so the mistakes on all hyperedges in~$P_z$ are, collectively, accounted for within factor~$1/(4(k-1))$,
	concluding the Proof of Theorem~\ref{thm:motifthm}.
\end{proof}

We note that the approximation analysis given by Theorem~\ref{thm:motifthm} immediately extends to other variations of \mcc.
\paragraph{Extension~1}
Our analysis directly carries over to the \mmcc objective~\cite{li2017motif}, which includes penalties for all hyperedges up to size~$k$:
\[\min \,\, \sum_{t = 2}^k \rho_t \sum_{\ee \in E_t} w_\ee^+ x_\ee + w_\ee^- (1-x_\ee)\,,\]
where $\rho_t >0$ is a weight indicating how much we \emph{care} about motifs of size~$t$. Theorem~\ref{thm:motifthm} specifically considers
the case where $\rho_k = 1$ and $\rho_t = 0$ for $t < k$, but the analysis still holds for other combinations of~$\rho$-weights.
As noted by Li et al., it is sufficient to account for mistakes at the largest-sized motif~\cite{li2017motif}. We do note, however, that we will need to include $O(n^t)$ variables and constraints in the LP for each motif size $t$ where $\rho_t \neq 0$.
\paragraph{Extension~2}
We can consider a hybrid of the \lcc and \mcc objectives in which each degree-$k$ hyperedge is either positive with weight~$1-\lambda$ or negative with weight~$\lambda$.
If $\lambda \geq 1/2$, the proof of Theorem~\ref{thm:motifthm} still holds. Indeed, negative hyperedges are accounted for on an individual basis,
so weighting them more heavily has no effect.
When accounting for positive mistakes, we simply require an occasional extra line of algebra in which we note that for each negative hyperedge~$\ee$,
$\lambda (1- x_\ee) \geq (1-\lambda) (1-x_\ee)$. After applying this inequality, the result will follow through. 

As an example, after forming a singleton cluster~$\{u\}$, consider how to account for positive hyperedges that include~$u$ plus nodes inside the set~$T_u$.
We are no longer considering probability-constrained edges, so let~$E_k^+$ denote positive hyperedges (which all have weight~$1-\lambda$) and let~$E_k^-$ denote the set of negative hyperedges (with weight~$\lambda$).
The weight of mistakes made by the algorithm is at most $(1-\lambda)|T_u^k|$, which is the case if all $k$-tuples in~$T_u^k$ are positive. Then
\begin{align*}
\LP(T_u^k) &= \sum_{\substack{\ee \in T_u^k\\ \ee \in E_k^+ }} (1-\lambda) x_\ee + \sum_{\substack{\ee \in T_u^k\\ \ee \in E_k^-}} \lambda (1-x_\ee) \\
&\geq \sum_{\substack{\ee \in T_u^k\\ \ee \in E_k^+ }} (1-\lambda) x_\ee + \sum_{\substack{\ee \in T_u^k\\ \ee \in E_k^- }} (1-\lambda) (1-x_\ee) \\
&\geq \sum_{\substack{\ee \in T_u^k\\ \ee \in E_k^+ }} (1-\lambda) x_\ee + \sum_{\substack{\ee \in T_u^k\\ \ee \in E_k^- }} (1-\lambda) x_\ee \\
&= (1-\lambda) \sum_{\ee \in T_u^k} x_\ee \geq (1-\lambda)\frac{|T_u^k| }{4(k-1)}\,,
\end{align*}
by Lemma~\ref{lem:singleton}, so the mistakes are accounted for within the desired factor.


\subsection{Two-Cluster MotifCC}
The LP relaxation of \mcc involves~$O(n^k)$ variables and~$O(n^k)$ constraints for all~$k > 2$, and is therefore very expensive to solve in practice.
For standard \cc, only a few of the known approximation algorithms avoid solving an expensive convex
relaxation~\cite{ailon2008aggregating,Bansal2004correlation}; it is natural to ask whether a similar, combinatorial, approach
can be taken for \mcc.
We give first steps in this direction, with a constant-factor combinatorial approximation algorithm for \mcc,
when the output is restricted to two clusters,
generalizing the $3$-approximation of Bansal et al.\ for \twocc~\cite{Bansal2004correlation}. 
Our method is shown in Algorithm~\ref{alg:papt}. We call this algorithm \emph{Pick-a-Pivot-Tuple}, and show it satisfies the following result:
\begin{algorithm}[tb]                                                           
	\caption{\textsc{Pick-A-Pivot-Tuple}}                                              
	\label{alg:papt}
	\begin{algorithmic}[5]                                                  
		\State{\bfseries Input:} An instance of 2-\mcc:
		$G = (V,E_k)$ be a hypergraph where $(w_\ee^+,w_\ee^-) \in \{ (0,1), (1,0) \}$ for every $k$-tuple.
		\For{$(k-1)$-tuple $\kk \subseteq V$}
		\State $\mathcal{C}_{\kk} \gets$ the clustering formed by placing~\kk in a cluster with all~$u$ such that $\ee = \kk \cup \{u\}$ is positive, and placing all remaining nodes in the other cluster.
		\EndFor
		\State \textbf{Return} the $\mathcal{C}_{\kk}$ with fewest mistakes.
	\end{algorithmic}
\end{algorithm}
\begin{theorem}
	\label{thm:2motifcc}
	For a constant integer~$k > 1$, Algorithm~\ref{alg:papt} returns a $(1 + kc)$-approximation for 2-\mcc,
	where $c \leq 2^{k-2}$ for $k = 2,3$, while $\lim_{n \rightarrow \infty}c = 2^{k-2}$ for $k > 3$.
\end{theorem}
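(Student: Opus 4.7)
The plan is to analyze the clustering $\mathcal{C}_{\mathcal{K}}$ produced by Algorithm~\ref{alg:papt} for a best-chosen pivot $(k-1)$-tuple~$\mathcal{K}$; since the algorithm enumerates all $(k-1)$-tuples and returns the minimum-cost clustering, it suffices to exhibit a single $\mathcal{K}$ achieving cost $(1 + kc)\OPT$. I would fix an optimal bipartition $\{A^*, B^*\}$, assume WLOG $|A^*| \geq |B^*|$, and focus on pivots $\mathcal{K} \subseteq A^*$ (with small edge cases when $|A^*| < k-1$ handled directly, since then only constantly many instances arise).

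Two structural observations form the backbone. First, Algorithm~\ref{alg:papt} is \emph{exact} on every $k$-tuple $\ee \supseteq \mathcal{K}$: such an $\ee$ equals $\mathcal{K} \cup \{u\}$ for a unique~$u$, and by construction $u$ joins $\mathcal{K}$'s cluster iff $\ee$ is positive, matching the label. Second, call $u \in V$ \emph{misplaced} (relative to $\mathcal{K}$ and OPT) if $\mathcal{K} \cup \{u\}$ is an OPT mistake, and let $D_{\mathcal{K}}$ denote the misplaced set. Observe that $|D_{\mathcal{K}}|$ is exactly the number of OPT mistakes among $k$-tuples containing $\mathcal{K}$, so $|D_{\mathcal{K}}|\leq \OPT$. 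A short case analysis shows that whenever the misplaced-indicator is \emph{constant} over the vertices of a hyperedge $\ee \not\supseteq \mathcal{K}$ (all misplaced or none misplaced), ALG's together/split verdict on $\ee$ matches OPT's — since the ALG-sides differ from the OPT-sides only by a global swap — so $\ee$ is a mistake for ALG iff it is one for OPT. Consequently, ALG can incur an \emph{extra} mistake on $\ee$ (beyond what OPT already does) only when $\ee$ \emph{straddles} $D_{\mathcal{K}}$ and its complement. Writing $R_{\mathcal{K}}$ for the set of straddlers, this yields
\[
|E^{\mathrm{alg}}_{\mathrm{mis}}(\mathcal{C}_{\mathcal{K}})| \ \leq\ \OPT + |R_{\mathcal{K}}|.
\]

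The main task is to bound $|R_{\mathcal{K}}|\leq kc\cdot\OPT$ for some~$\mathcal{K}$. The natural witness for $\ee \in R_{\mathcal{K}}$ is any misplaced vertex $u \in \ee \cap D_{\mathcal{K}}$, which certifies the OPT mistake $\mathcal{K} \cup \{u\}$. I plan to use an averaging argument over $\mathcal{K} \subseteq A^*$: compute $\mathbb{E}[|R_{\mathcal{K}}|]$ for a uniformly random $\mathcal{K}$, and show it is at most $kc\cdot\OPT$, so some concrete $\mathcal{K}$ realizes the bound. The contribution of a fixed OPT mistake $\ee^\star$ to the expectation splits according to which of the $k$ possible $(k-1)$-subsets inside $\ee^\star$ plays the role of $\mathcal{K}$; for each such choice, the straddlers witnessed by $\ee^\star$ are counted by distributing the remaining $k-1$ vertices of a potential straddler between $D_{\mathcal{K}}$ and its complement, giving a binomial sum whose dominant term evaluates to $k\cdot 2^{k-2}$.

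The main obstacle is the combinatorial counting in the last paragraph, particularly avoiding double-counting when a single straddler has multiple valid witnesses: each $\ee \in R_{\mathcal{K}}$ carries between $1$ and $k-1$ misplaced vertices, each of which certifies a potentially different OPT mistake. My plan is to fix a canonical witness (e.g., the lexicographically smallest misplaced vertex in~$\ee$) and then track, across the random choice of $\mathcal{K}$, the exact number of straddlers that name a given OPT mistake as their witness, showing this number is bounded by $k\cdot 2^{k-2}$. For $k=2$ the count collapses to~$1$ and the argument recovers Bansal et al.'s $3$-approximation via the familiar bad-triangle accounting. For $k=3$ the counting is short and yields the claimed $c=2$. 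For $k>3$ the same counting produces additional lower-order binomial terms that vanish only in the $n\to\infty$ limit, matching the asymptotic form of $c$ in the theorem and explaining why the clean bound $c\leq 2^{k-2}$ is claimed only for $k=2,3$.
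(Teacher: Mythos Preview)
Your structural setup matches the paper's: fix an optimal bipartition, pivot on a same-side $(k-1)$-tuple $\mathcal{K}$, identify the misplaced set $D_{\mathcal{K}}$, and observe that the extra mistakes beyond $\OPT$ are confined to hyperedges meeting $D_{\mathcal{K}}$ (the paper uses the cruder bound $|D_{\mathcal{K}}|\binom{n-1}{k-1}$ rather than your straddler count, but either is valid). The gap is in your averaging step. You assert that for a fixed OPT mistake $\ee^\star$ and a fixed $\mathcal{K}$ among its $(k-1)$-subsets, the straddlers naming $\ee^\star$ as witness are counted by a ``binomial sum whose dominant term evaluates to $k\cdot 2^{k-2}$.'' But that sum counts \emph{patterns} of how the remaining $k-1$ vertices split between $D_{\mathcal{K}}$ and its complement, not straddlers themselves: for a fixed witness $u\in D_{\mathcal{K}}$ the number of $k$-tuples through $u$ is $\Theta(n^{k-1})$, not $O(1)$, so no constant per-mistake charge can close the argument.

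In the paper the factor $2^{k-2}$ arises from a different place. Setting $d=\min_{\mathcal{K}}|D_{\mathcal{K}}|$ over \emph{all} same-side $(k-1)$-tuples and letting $P\geq 2\binom{n/2}{k-1}$ be their number, the paper proves a \emph{second} lower bound $\OPT \geq dP/k$ by double counting (every same-side $\mathcal{K}$ has at least $d$ OPT mistakes through it; each mistake is counted at most $k$ times). Dividing the extra-mistake bound $d\binom{n-1}{k-1}$ by this gives $\tfrac{k}{2}\cdot\binom{n-1}{k-1}/\binom{n/2}{k-1}\to k\cdot 2^{k-2}$. Your sketch never introduces $P$ or $\binom{n/2}{k-1}$; the only lower bound you actually state, $|D_{\mathcal{K}}|\leq\OPT$, by itself yields the useless $\ALG\leq \OPT\bigl(1+\binom{n-1}{k-1}\bigr)$. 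The averaging route can be salvaged by showing $\mathbb{E}_{\mathcal{K}}[|R_{\mathcal{K}}|]\leq \binom{n-1}{k-1}\cdot k\,\OPT/P$, but it is the binomial-coefficient \emph{ratio}, not a binomial \emph{sum}, that produces the $2^{k-2}$.
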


\begin{proof}
	
	Let \OPT be the minimum number of mistakes, and let \ALG be the number of mistakes made by the algorithm.
	In order to bound \ALG, assume for now that we have separated the nodes into an optimal partition,~$\mathcal{C}$.
	Consider what happens if we iterate through all $(k-1)$-tuples of nodes \kk that are unbroken in~$\mathcal{C}$. For each~$\kk$, form the clustering
	$\mathcal{C}_{\kk}$ by \emph{pivoting} on~$\kk$ and compare it against~$\mathcal{C}$.
	Let~$d_{\kk}$ be the number of nodes that~$\mathcal{C}_{\kk}$ moves to the \emph{wrong} side of the cut, compared to~$\mathcal{C}$.
	Each moved node~$w$ corresponds to a $k$-tuple $\kk\cup \{w\}$ at which~$\mathcal{C}$ makes a mistake.
	
	\paragraph{Bounding \OPT below} Let~$d$ be the minimum number of nodes that switch sides when we pivot around some $(k-1)$-tuple of nodes on the same side
	of~$\mathcal{C}$. For each such $(k-1)$-tuple~$\kk$, the optimal partition makes at least~$d$ mistakes at hyperedges containing all nodes in~$\kk$.
	Since there are~$k$ distinct ways to select $(k-1)$-tuple of nodes from a set of~$k$ nodes, $\OPT \geq \frac{dP}{k}$ where~$P$ is the total number of $(k-1)$-tuples that are located on the same side of~$\mathcal{C}$. Though we do not know~$P$ a priori, at minimum this is equal to $2 {n/2 \choose k-1}$, which is the case when $\mathcal{C}$ partitions the graph into two equally-sized clusters ($P$ would be larger if one side contained more than half the nodes). Thus, $OPT \geq 2d {n/2 \choose k-1}/k$.
	
	\paragraph{Bounding \ALG above} The clustering returned by our algorithm will be at least as good as $\mathcal{C}_{\kk^*}$, where~$\kk^*$ is the $(k-1)$-tuple
	of pivot nodes on the same side of~$\mathcal{C}$ that moves only~$d$ nodes from one side of~$\mathcal{C}$ to the other.
	Moving these~$d$ nodes contributes at most an extra $d {n-1 \choose k-1}$ mistakes in addition to mistakes that $\mathcal{C}$ already made. Therefore $ALG \leq OPT + d {n-1 \choose k-1}$. Using the observation that
	\[ \lim_{n \rightarrow \infty} \frac{{n-1 \choose k-1}}{{n/2 \choose k-1}} = \lim_{n\rightarrow \infty} \frac{\frac{(n-1)(n-2) \cdots
			(n-k+1)}{(k-1)!} }{ \frac{ \frac n2 (\frac n2 -1) \cdots (\frac n2 -k+2)}{(k-1)!} } = \lim_{n\rightarrow \infty} \frac{2^{k-1}(n-1)(n-2) \cdots
		(n-k+1)}{ n(n-2) \cdots (n-2k+4)} =  2^{k-1}\,,\]
	we can bound the total mistakes made by the algorithm in terms of \OPT:
	\[ \frac{ALG}{OPT} \leq \frac{OPT + d {n-1 \choose k-1}}{OPT} \leq 1 + \frac{d {n-1 \choose k-1}} {2d {n/2 \choose k-1}/k }  = 1 +
	\frac{k}{2}\frac{{n-1 \choose k-1}}{{n/2 \choose k-1}} \longrightarrow 1 + k\, 2^{k-2}\,, \] 
	as $n \rightarrow \infty$.
	We finally note that when $k = 3$, the result holds for all~$n$, not just in the limit, since 
	\[ \frac{2^{k-1}(n-1)(n-2) \cdots (n-k+1)}{ n(n-2) \cdots (n-2k+4)} = \frac{2^2 (n-1)(n-2)}{n(n-2)}  \leq 4\,. \]
	A similar argument applies when~$k = 2$, which in essence is what allowed Bansal et al.~\cite{Bansal2004correlation}
	to develop a $3$-approximation for this case, independent of~$n$.
\end{proof}
Although the exponential dependence on~$k$ in makes this result a poor approximation for large motifs, at least in the case $k = 3$, this is a~$7$-approximation for all~$n$, not just for large~$n$.

\section{Discussion}
We have demonstrated a~$\Theta(\log n)$ integrality gap for the \lcc LP relaxation, which highlights why previous attempts to obtain a constant-factor
approximation via LP rounding have failed.
It remains an open question whether better approximation factors exist for small values of~$\lambda$ in
$O(1/\log n)$. For minimizing disagreements, there are relatively few techniques that don't rely on the LP relaxation that lead to approximations better than $O(\log n)$ for different variants of correlation clustering. The next step is either to develop an entirely new approach or prove further hardness results for approximating \lcc when $\lambda$ is small.

For \mcc, we have given an approximation algorithm for arbitrary (constant) hyperedge size~$k$ that is linear in~$k$,
and provided a first combinatorial approximation result, which avoids solving an LP relaxation, for to the two-cluster case.
An interesting open question is whether a pivoting algorithm \`a la Ailon et al.~\cite{ailon2008aggregating} could be developed for the \mcc objective. For maximizing agreements, the simple strategy of either placing all nodes together or separating all nodes into singletons will still lead to a 1/2-approximation for hypergraphs with arbitrary weights and any $k$. This leads to open questions about what results for maximizing agreements can be generalized to the hypergraph setting. Another open question is whether an approximation that is independent of $k$ could be developed for minimizing disagreements in hypergraphs.
\bibliography{gvw-isaac}

\begin{thebibliography}{10}

\bibitem{agarwal2005uncut}
Amit Agarwal, Moses Charikar, Konstantin Makarychev, and Yury Makarychev.
\newblock ${O}(\sqrt {\log n})$ approximation algorithms for {Min UnCut}, {Min
  2CNF Deletion}, and {Directed Cut} problems.
\newblock In {\em STOC 05}, pages 573--581. ACM, 2005.

\bibitem{ailon2008aggregating}
Nir Ailon, Moses Charikar, and Alantha Newman.
\newblock Aggregating inconsistent information: ranking and clustering.
\newblock {\em Journal of the ACM (JACM)}, 55(5):23, 2008.

\bibitem{Bansal2004correlation}
Nikhil Bansal, Avrim Blum, and Shuchi Chawla.
\newblock Correlation clustering.
\newblock {\em Machine Learning}, 56:89--113, 2004.

\bibitem{benson2016}
Austin~R. Benson, David~F. Gleich, and Jure Leskovec.
\newblock Higher-order organization of complex networks.
\newblock {\em Science}, 353(6295):163--166, 2016.

\bibitem{charikar2005clustering}
Moses Charikar, Venkatesan Guruswami, and Anthony Wirth.
\newblock Clustering with qualitative information.
\newblock {\em Journal of Computer and System Sciences}, 71(3):360--383, 2005.

\bibitem{chawla2006hardness}
Shuchi Chawla, Robert Krauthgamer, Ravi Kumar, Yuval Rabani, and D.~Sivakumar.
\newblock On the hardness of approximating multicut and sparsest-cut.
\newblock {\em Computational Complexity}, 15(2):94--114, June 2006.

\bibitem{chawla2015near}
Shuchi Chawla, Konstantin Makarychev, Tselil Schramm, and Grigory Yaroslavtsev.
\newblock Near optimal {LP} rounding algorithm for correlation clustering on
  complete and complete {$k$}-partite graphs.
\newblock In {\em STOC 15}, pages 219--228. ACM, 2015.

\bibitem{coleman20082cc}
Tom Coleman, James Saunderson, and Anthony Wirth.
\newblock A local-search 2-approximation for 2-correlation-clustering.
\newblock In {\em ESA 08}, pages 308--319, 2008.

\bibitem{dasgupta2006algorithmic}
Bhaskar DasGupta, German~A. Enciso, Eduardo Sontag, and Yi~Zhang.
\newblock Algorithmic and complexity results for decompositions of biological
  networks into monotone subsystems.
\newblock In {\em WEA 06}, pages 253--264, 2006.

\bibitem{demaine2006correlation}
Erik~D Demaine, Dotan Emanuel, Amos Fiat, and Nicole Immorlica.
\newblock Correlation clustering in general weighted graphs.
\newblock {\em Theoretical Computer Science}, 361(2-3):172--187, 2006.

\bibitem{fukunga2018highcc}
Takuro Fukunaga.
\newblock {LP}-based pivoting algorithm for higher-order correlation
  clustering.
\newblock In {\em Computing and Combinatorics}, pages 51--62, Cham, 2018.
  Springer International Publishing.

\bibitem{giotis2006correlation}
Ioannis Giotis and Venkatesan Guruswami.
\newblock Correlation clustering with a fixed number of clusters.
\newblock {\em Theory of Computing}, 2:249--266, 2006.

\bibitem{kale2007efficient}
Satyen Kale.
\newblock {\em Efficient algorithms using the multiplicative weights update
  method}.
\newblock PhD thesis, Princeton University, November 2007.
\newblock Computer Science Technical Report 804.

\bibitem{khot2005ugc}
S.~A. Khot and N.~K. Vishnoi.
\newblock The unique games conjecture, integrality gap for cut problems and
  embeddability of negative type metrics into $\ell_1$.
\newblock In {\em FOCS 05}, pages 53--62, Oct 2005.

\bibitem{khot2002on}
Subhash Khot.
\newblock On the power of unique 2-prover 1-round games.
\newblock In {\em STOC 02}, pages 767--775, 2002.

\bibitem{kim2011highcc}
Sungwoong Kim, Sebastian Nowozin, Pushmeet Kohli, and Chang~D. Yoo.
\newblock Higher-order correlation clustering for image segmentation.
\newblock In {\em NIPS 11}, pages 1530--1538, 2011.

\bibitem{li2017motif}
P.~Li, H.~Dau, G.~Puleo, and O.~Milenkovic.
\newblock Motif clustering and overlapping clustering for social network
  analysis.
\newblock In {\em INFOCOM 17}, pages 1--9, May 2017.

\bibitem{newman2004modularity}
Mark~EJ Newman and Michelle Girvan.
\newblock Finding and evaluating community structure in networks.
\newblock {\em Physical review E}, 69(026113), 2004.

\bibitem{puleo2015cc}
G.~Puleo and O.~Milenkovic.
\newblock Correlation clustering with constrained cluster sizes and extended
  weights bounds.
\newblock {\em SIAM Journal on Optimization}, 25(3):1857--1872, 2015.

\bibitem{reingold2000entropy}
Omer Reingold, Salil Vadhan, and Avi Wigderson.
\newblock Entropy waves, the zig-zag graph product, and new constant-degree
  expanders and extractors.
\newblock In {\em FOCS 00}, pages 3--13, 2000.

\bibitem{zuylen2009deterministic}
Anke van Zuylen and David~P. Williamson.
\newblock Deterministic pivoting algorithms for constrained ranking and
  clustering problems.
\newblock {\em Mathematics of Operations Research}, 34(3):594--620, 2009.

\bibitem{veldt2017lamcc}
Nate Veldt, David Gleich, and Anthony Wirth.
\newblock Unifying sparsest cut, cluster deletion, and modularity clustering
  objectives with correlation clustering.
\newblock {\em arXiv}, cs.DS, 2017.

\bibitem{veldt2018correlation}
Nate Veldt, David~F Gleich, and Anthony Wirth.
\newblock A correlation clustering framework for community detection.
\newblock In {\em WWW 18}, pages 439--448. International World Wide Web
  Conferences Steering Committee, 2018.

\bibitem{wirth2004approximation}
Anthony~Ian Wirth.
\newblock {\em Approximation Algorithms for Clustering}.
\newblock PhD thesis, Princeton University, November 2004.
\newblock Computer Science Technical Report 716.

\end{thebibliography}
\bibliographystyle{plain}

\newpage
\appendix

\section{Approximation Results for \lcc when $\lambda < 1/2$}
A pivoting algorithm for \cc operates by repeatedly selecting an unclustered node from the graph,
and assigning it to a cluster with all of its (positive) neighbors that have yet to be clustered. The algorithm repeats this procedure until all nodes are clustered. Ailon et al.\ gave an approximation result for this method when nodes are chosen uniformly at random~\cite{ailon2008aggregating}; van Zuylen and Williamson later developed deterministic pivoting algorithms based on a careful selection of pivot nodes~\cite{zuylen2009deterministic}.

In previous work we gave a $3$-approximation for \lcc when $\lambda > 1/2$, by applying a theorem of van Zuylen and Williamson for deterministic pivoting
algorithms for \cc.
We restate a slight variant of the theorem here that is sufficient for our purposes. A full proof, including how to deterministically select pivot nodes, is given in the original work of van Zuylen and Williamson~\cite{zuylen2009deterministic}.
\begin{theorem}
	\label{thm:31}
	(\cite[Theorem~3.1]{zuylen2009deterministic})
	Consider an instance of weighted \cc, $G = (V, (w_{ij}^+),(w_{ij}^-))$, a set of associated LP costs $\{c_{ij} : i \in V, j \in V, i \neq j \}$ and another graph $G' = (V,
	F^+,F^-)$, where $\{ F^+, F^-\}$ partitions all pairs of nodes $V \times V$ in such a way that
	\begin{itemize}
		\item $w_{ij}^- \leq \alpha c_{ij}$ for all $(i,j) \in F^+$ and $w_{i,j}^+ \leq \alpha c_{ij}$ for all $(i,j) \in F^-$,
		\item $w_{ij}^+ + w_{jk}^+ + w_{ik}^- \leq \alpha(c_{ij} + c_{jk} + c_{ik})$ for every \emph{bad triplet}: $(i,j) \in F^+, (j,k) \in F^+$ and $(i,k) \in F^-$.
	\end{itemize}
	There exists a deterministic pivoting algorithm which, when applied to~$G'$, produces an output within a factor~$\alpha$ of the optimum
	for~$G$.
\end{theorem}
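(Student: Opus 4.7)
The plan is to analyze the pivoting algorithm on $G'$: at each iteration a pivot $p$ is selected from the currently active set $W$, and $p$ together with all of its $F^+$-neighbors in $W$ is removed from $W$ and emitted as one cluster. Although clustering decisions use $G'$, the cost is measured against the original weights $(w_{ij}^+),(w_{ij}^-)$ on $G$, so the analysis must account for two effects: (i) \emph{local} mistakes on pairs $(p,i)$ involving the pivot, where either $(p,i)\in F^+$ forces them together at cost $w_{p,i}^-$, or $(p,i)\in F^-$ forces them apart at cost $w_{p,i}^+$; and (ii) \emph{transitive} mistakes on pairs $(i,j)$ with neither endpoint being the pivot, whose fate is nonetheless sealed by the pivot step (the canonical example being $(p,i),(p,j)\in F^+$ but $(i,j)\in F^-$, forcing $i,j$ into the same cluster at cost $w_{i,j}^+$).

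The local mistakes are handled directly: the first two hypotheses bound each such cost by $\alpha c_{p,i}$, so they are paid for pairwise by the LP. The transitive mistakes are precisely what the bad-triplet hypothesis is designed to handle: each transitive mistake corresponds to a bad triplet $\{i,p,j\}$, and the aggregated algorithmic cost on such a triplet is bounded by $\alpha$ times the sum of LP costs on the three pairs. I would first carry out this argument for the randomized variant (uniform sampling from $W$): by the standard Ailon--Charikar--Newman accounting, each bad triangle contributes to the expected cost exactly once (when one of its three vertices is sampled as pivot while all three remain active), and hypothesis~3 is tight enough to bound this contribution by $\alpha$ times the triangle's LP cost. Summing over bad triangles and adding the local contributions yields an $\alpha$-approximation in expectation.

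For the deterministic version, the plan is to derandomize via a greedy pivot rule: at each iteration choose $p\in W$ minimizing the ratio between the incurred algorithmic cost (from local and transitive mistakes caused by pivoting on $p$) and the LP cost of the pairs and triples resolved by this choice. An averaging argument over the randomized rule guarantees that such a $p$ achieving ratio at most $\alpha$ always exists, so iterating yields the overall $\alpha$-approximation. The main obstacle will be the charging bookkeeping: one must define exactly which pairs and triples are ``consumed'' by each pivot step and verify that every $c_{ij}$ is charged at most once across the whole execution, so that per-step ratios compose into a single $\alpha$-factor bound rather than accumulating. The careful design of this potential function and consumption set is the substantive technical content deferred to van Zuylen and Williamson's original paper, and it is the step that distinguishes the deterministic proof from its cleaner randomized counterpart.
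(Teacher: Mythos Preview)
The paper does not actually prove this statement: it is quoted as Theorem~3.1 of van~Zuylen and Williamson, and the paper explicitly says ``a full proof, including how to deterministically select pivot nodes, is given in the original work.'' So there is nothing on the paper's side to compare against beyond the citation.

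Your sketch has the right overall shape (analyze one pivot round, charge costs to the $c_{ij}$, do the randomized version first, then derandomize by a greedy ratio rule with an averaging argument), and you are right that the delicate step is the charging bookkeeping. However, the cost decomposition you set up does not close. You split costs into (i)~pairs $(p,i)$ containing the pivot and (ii)~pairs $(i,j)$ not containing the pivot, and then assert that every cost in class~(ii) corresponds to a bad triplet $\{i,p,j\}$. That is false: if $(p,i),(p,j),(i,j)\in F^+$, the pivot step puts $i,j$ together at cost $w_{ij}^-$, yet $\{p,i,j\}$ is \emph{not} a bad triangle; likewise if $(p,i)\in F^+$, $(p,j)\in F^-$, $(i,j)\in F^-$, the pair $(i,j)$ is split at cost $w_{ij}^+$ with no bad triangle in sight. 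The dichotomy that actually works is not ``pivot pair vs.\ non-pivot pair'' but ``resolved consistently with its $F$-label'' (together when $F^+$, apart when $F^-$), which is bounded pairwise by the first bullet, versus ``resolved inconsistently with its $F$-label,'' which is exactly the case that forces a bad triangle with the pivot and is handled by the second bullet. There is also a slip in your canonical example: placing $i,j$ in the same cluster incurs $w_{ij}^-$, not $w_{ij}^+$; and you refer to a ``hypothesis~3'' where the theorem has only two bulleted hypotheses.
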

Pseudocode for our algorithm, \textsc{ThreeLP}, shown to be a factor-$3$ approximation~\cite{veldt2018correlation}, is given in Algorithm~\ref{alg:LP3}.
\begin{algorithm}[tb]
	\caption{\textsc{ThreeLP}}
	\begin{algorithmic}[5]
		\State{\bfseries Input:} An instance of \lcc: $G = (V,E^+,E^-)$, $\lambda  \in (0,1)$
		\State Solve the \lcc LP relaxation~\eqref{eq:lcc}.
		\State Define $G' = (V,F^+,F^-)$ where
		\[ F^+ = \{(i,j) : x_{ij} < 1/3 \}, \hspace{.5cm} F^- = \{(i,j) : x_{ij} \geq 1/3 \}  \]
		\State Apply a (randomized or deterministic) pivoting algorithm on $G'$.
	\end{algorithmic}
	\label{alg:LP3}
\end{algorithm}
We extend the approximation guarantees for \threeLP, to include the values of $\lambda < 1/2$.
We also note that a similar algorithm produces a 2-approximation for cluster deletion (i.e. when $\lambda \geq |E|/(1+|E|)$)~\cite{veldt2018correlation}, but this relies on a slightly different construction of the new signed graph $G' = (V,F^+,F^-)$. Therefore, we just focus on the approximation guarantees that hold for Algorithm~\ref{alg:LP3}.
\begin{theorem} (Theorem \ref{thm:lamcc} in main text)
	Algorithm~\ref{alg:LP3} returns a $3$-approximation for \lcc when $\lambda \geq 1/2$. When $\lambda < 1/2$, it returns an~$\alpha$ approximation,
	where $\alpha = \max \left \{\,\frac{1}{\lambda}, \frac{6 -3\lambda}{1 +\lambda} \right \}$. 	
\end{theorem}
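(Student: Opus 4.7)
The plan is to apply the deterministic pivoting framework of van Zuylen and Williamson (Theorem~\ref{thm:31}) with LP costs $c_{ij}=(1-\lambda)x_{ij}$ for $(i,j)\in E^+$ and $c_{ij}=\lambda(1-x_{ij})$ for $(i,j)\in E^-$, and the partition $\{F^+,F^-\}$ produced by Algorithm~\ref{alg:LP3}. I would determine the smallest $\alpha$ satisfying both the single-edge inequality and the bad-triplet inequality required by that theorem.

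First I would dispatch the single-edge conditions. Only two subcases are non-trivial: a positive original edge with $x_{ij}\ge 1/3$ assigned to $F^-$, which gives $w_{ij}^+/c_{ij}=1/x_{ij}\le 3$; and a negative original edge with $x_{ij}<1/3$ assigned to $F^+$, which gives $w_{ij}^-/c_{ij}=1/(1-x_{ij})<3/2$. Hence $\alpha\ge 3$ already covers this part for every $\lambda\in(0,1)$.

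Next I would handle the bad-triplet condition. In a bad triplet we have $x_{ij},x_{jk}<1/3$, $x_{ik}\ge 1/3$, and the LP triangle constraint forces $x_{ik}\le x_{ij}+x_{jk}<2/3$. The strategy is a case analysis over the $2^3=8$ sign patterns of the three edges in the original graph. Two patterns drive the bound: (A) $(i,j),(j,k)\in E^+$ and $(i,k)\in E^-$, where $w_{ij}^++w_{jk}^++w_{ik}^-=2-\lambda$, and substituting $1-x_{ik}\ge 1-x_{ij}-x_{jk}$ together with $x_{ij}+x_{jk}\ge 1/3$ gives $c_{ij}+c_{jk}+c_{ik}\ge\lambda+(1-2\lambda)(x_{ij}+x_{jk})\ge(1+\lambda)/3$ (for $\lambda<1/2$), yielding ratio at most $(6-3\lambda)/(1+\lambda)$; and (B) the pattern $(i,j)\in E^+,\ (j,k)\in E^-,\ (i,k)\in E^-$ (and its mirror), where the LHS is $1$ and the bounds $c_{jk}>2\lambda/3$ (from $x_{jk}<1/3$) and $c_{ik}>\lambda/3$ (from $x_{ik}<2/3$) combine to give $c_{ij}+c_{jk}+c_{ik}>\lambda$, hence ratio at most $1/\lambda$. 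The remaining six sign patterns all yield ratio at most $3$ by straightforward bounds: either LHS is $0$, or the cost sum contains a positive-edge term bounded below by $(1-\lambda)/3$ via $x_{ik}\ge 1/3$, or (for the all-negative pattern) the cost sum is at least $5\lambda/3$.

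Combining, the required $\alpha$ is $\max\{3,(6-3\lambda)/(1+\lambda),1/\lambda\}$. Since $(6-3\lambda)/(1+\lambda)\ge 3$ iff $\lambda\le 1/2$ and $1/\lambda\ge 3$ iff $\lambda\le 1/3$, this collapses to $\alpha=3$ for $\lambda\ge 1/2$ and $\alpha=\max\{1/\lambda,(6-3\lambda)/(1+\lambda)\}$ for $\lambda<1/2$, matching the statement. The main obstacle is the bad-triplet bookkeeping, and in particular identifying in case (B) a coarse lower bound on the cost sum that yields precisely $1/\lambda$; a sharper analysis would yield a tighter $3/(4\lambda)$ factor, but the simpler two-term bound used above is enough for what the statement requires.
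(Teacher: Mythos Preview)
Your proposal is correct and follows essentially the same route as the paper: apply Theorem~\ref{thm:31} with the LP costs as budgets, dispatch the single-edge conditions (which yield ratio at most~$3$ regardless of~$\lambda$), and then case-split the bad triplet on the original sign pattern, with the pattern $(+,+,-)$ giving $(6-3\lambda)/(1+\lambda)$ and the pattern $(+,-,-)$ (and its mirror) giving $1/\lambda$ via the bound $c_{jk}+c_{ik}>\lambda$, exactly as the paper does. Your write-up is in fact a bit more self-contained than the paper's, which defers the remaining sign patterns and the single-edge checks to prior work; your closing remark that Case~(B) actually admits the sharper bound $3/(4\lambda)$ is a correct observation that the paper does not make.
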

\begin{proof}
	We show that the assumptions of Theorem~\ref{thm:31} hold for the specific approximation factors.
	Many aspects of the full proof for the~$\lambda \geq 1/2$ case~\cite{veldt2017lamcc} directly apply here, regardless of the value of~$\lambda$.
	In particular, the inequalities $w_{ij}^- \leq \alpha c_{ij}$ for all $(i,j) \in F^+$ and $w_{i,j}^+ \leq \alpha c_{ij}$ for all $(i,j) \in F^-$ hold independent
	of~$\lambda$.
	Next, we consider the second inequality
	\begin{equation}
	\label{ineq} 
	w_{ij}^+ + w_{jk}^+ + w_{ik}^- \leq \alpha(c_{ij} + c_{jk} + c_{ik})\,,
	\end{equation}
	which must hold for every triplet $\{i,j,k\}$ such that $(i,j) \in F^+, (j,k) \in F^+$ and $(i,k) \in F^-$.
	To show this, we must consider all possible types of edges that could be shared by nodes $\{i,j,k\}$ in the original graph $G =
	(V,E^+,E^-)$. We look at three cases that are central for understanding the approximation guarantees when~$\lambda < 1/2$. Recall from Algorithm~\ref{alg:LP3} that if $x_{uv} \geq 1/3$ we make $(u,v)$ a negative edge in $G'$, and otherwise we make it a positive edge. Therefore, if $\{i,j,k\}$ is a bad triangle in~$G'$ in which $(i,k) \in F^-$ is the negative edge, then $x_{ij} \leq 1/3$, $x_{jk} \leq 1/3$, and $x_{ik} > 1/3$.
	
	\paragraph{Case 1: $(i,j) \in E^+, (j,k) \in E^+$ and $(i,k) \in E^-$}
	Given these types of edges in the original graph, we know that $c_{ij} = (1-\lambda)x_{ij}$, $c_{jk} = (1-\lambda)x_{ik}$,
	and $c_{ik} = \lambda(1-x_{ik})$.
	Therefore,
	\begin{align*}
	\alpha (c_{ij} &+ c_{jk} + c_{ik}) = \alpha \left((1-\lambda)(x_{ij} + x_{jk}) + \lambda(1-x_{ik} )  \right) \\
	&\geq \alpha \left( (1-\lambda)x_{ik} + \lambda(1-x_{ik}) \right) = \alpha \left( (1-2 \lambda) x_{ik} + \lambda \right) \\
	&> \alpha\left( (1-2\lambda) 1/3 + \lambda \right) = \alpha (1+\lambda)/{3}\,.
	\end{align*}
	The weights satisfy $w_{ij}^+ = w_{jk}^+ = 1- \lambda$ and $w_{ik}^- = \lambda$.
	With a few steps of algebra we can see that the above expression is an upper bound for $2-\lambda = w_{ij}^+ + w_{jk}^+ + w_{ik}^-$ (the right hand side of inequality~\eqref{ineq}) as long as $\alpha \geq \frac{6-3\lambda}{1+\lambda}$.
	
	\paragraph{Case 2: $(i,j) \in E^+, (j,k) \in E^-$ and $(i,k) \in E^-$}
	In this case, the LP costs are $(c_{ij},c_{jk},c_{ik}) = ( (1-\lambda)x_{ij}, \lambda (1-x_{jk}), \lambda (1-x_{ik})
	)$ and the weights are $(w_{ij}^+,w_{jk}^+,w_{ik}^-) = (1-\lambda, 0, \lambda)$. Therefore,
	\begin{align*}
	\alpha (c_{ij} & + c_{jk} + c_{ik}) = \alpha \left( (1-\lambda) x_{ij} +\lambda(1-x_{jk}) + \lambda(1-x_{ik})\right)\\
	& \geq \alpha \left( \lambda -\lambda x_{jk} + \lambda - \lambda x_{ik} \right) \geq \alpha \left( \lambda - \lambda/3 + \lambda - 2\lambda/3 \right)\\
	&= \alpha\lambda \geq 1 = w_{ij}^+ + w_{jk}^+ + w_{ik}^-
	\end{align*}
	which holds as long as~$\alpha \geq 1/\lambda$.
	\paragraph{Case 3: $(i,j) \in E^-, (j,k) \in E^+$ and $(i,k) \in E^-$}
	This case is symmetric to Case~2. The proof follows by simply switching the roles of edges $(i,j)$ and $(j,k)$.
	
	A full proof of the remaining cases, which all hold independent of $\lambda$, is given in previous work~\cite{veldt2017lamcc}. We
	therefore see that if $\alpha = \max \left \{\, \frac{1}{\lambda}, \frac{6- 3 \lambda}{1 +\lambda} \right \}$, the full result holds.
\end{proof}

By solving ${1}/\lambda = (6- 3 \lambda)/(1 +\lambda)$ for $\lambda$, we find that the behavior of the approximation factor changes when $\lambda = (5-
\sqrt{13})/6 \approx 0.2324$. For~$\lambda$ greater than this threshold, the approximation factor is always between~$3$ and~$4.303$.  

%

\end{document}